\newtheorem{theorem}{Theorem}
\newtheorem{lemma}[theorem]{Lemma}
\newtheorem{definition}[theorem]{Definition}
\newtheorem{remark}[theorem]{Remark}
\begin{document}

\title{One-way finite automata with quantum and classical states\thanks{This work is supported in
part by the National Natural Science Foundation of China (Nos.
60873055, 61073054), the Natural Science Foundation of Guangdong
Province of China (No. 10251027501000004), the Fundamental
Research Funds for the Central Universities (Nos.
10lgzd12,11lgpy36), the Research Foundation for the Doctoral
Program of Higher School of Ministry of Education (Nos.
20100171110042, 20100171120051) of China,  the Czech Ministry of
Education (No. MSM0021622419), the China Postdoctoral Science
Foundation project (Nos. 20090460808, 201003375), and the project
of  SQIG at IT, funded by FCT and EU FEDER projects projects QSec
PTDC/EIA/67661/2006, AMDSC UTAustin/MAT/0057/2008, NoE Euro-NF,
and IT Project QuantTel.}}

\author{Shenggen Zheng$^{1,}$\thanks{{\it  E-mail
address:} zhengshenggen@gmail.com},\hskip
2mm Daowen Qiu$^{1,3,4,}$\thanks{Corresponding author. {\it E-mail address:}
issqdw@mail.sysu.edu.cn (D. Qiu)},
\hskip 2mm Lvzhou Li$^{1,}$
\thanks{{\it  E-mail address:} lilvzhou@gmail.com},\hskip 2mm Jozef Gruska$^{2},$\thanks{{\it  E-mail
address:} gruska@fi.muni.cz}\\
\small{{\it $^{1}$ Department of
Computer Science, Sun Yat-sen University, Guangzhou 510006,
  China }}\\
\small{{\it $^{2}$ Faculty of Informatics, Masaryk University, Brno, 602 00, Czech Republic }}\\
\small {{\it $^{3}$ SQIG--Instituto de Telecomunica\c{c}\~{o}es, Departamento de Matem\'{a}tica,}}\\
\small {{\it  Instituto Superior T\'{e}cnico, TULisbon, Av. Rovisco Pais
1049-001, Lisbon, Portugal}}\\
\small{{\it $^{4}$ The State Key Laboratory of Computer Science, Institute of Software,}}\\
\small{{ \it Chinese  Academy of Sciences, Beijing 100080, China}}
}

\date{ }
\maketitle \vskip 2mm \noindent {\bf Abstract}
\par
 In this paper, we introduce and explore a new model of {\it quantum finite automata} (QFA). Namely,
 {\it one-way finite automata with quantum and classical states} (1QCFA), a one way version of
 {\it two-way finite automata with quantum and classical states} (2QCFA) introduced by Ambainis and Watrous in 2002 \cite{AJ}. First, we prove that
  {\it one-way probabilistic finite automata} (1PFA) \cite{AP} and
  {\it one-way quantum finite automata with control language} (1QFACL) \cite{ACB}
  as well as several other models of QFA, can be simulated by 1QCFA. Afterwards, we
 explore several closure properties for the family of languages accepted
 by 1QCFA. Finally, the state complexity of
 1QCFA is explored and the main succinctness result is derived. Namely,  for any prime $m$ and any
 $\varepsilon_1>0$, there exists a language $L_{m}$ that cannot be recognized by any {\it measure-many one-way quantum finite automata}
 (MM-1QFA) \cite{Kon97} with bounded error $\frac{7}{9}+\epsilon_1$, and any 1PFA recognizing it has at last $m$ states,
 but $L_{m}$ can be recognized by a 1QCFA for any error bound $\epsilon>0$ with $\bf{O}(\log{m})$ quantum states and 12 classical states.

\par

\section{Introduction}

An important way to get a deeper insight into the power of various quantum
resources and features for information processing is to explore power of
various quantum
variations of the basic models of classical automata. Of a special interest
and importance is to do that for various quantum variations of classical
 finite automata because quantum resources are not cheap and quantum
operations are not easy to implement. Attempts to find out how much one
can do with very little of quantum resources and consequently with the
most simple quantum variations of classical finite automata are therefore
of particular interest. This paper
is an attempt to contribute to such line of research.

There are two basic approaches how to introduce quantum features to
classical models of finite automata. The first one is to consider quantum
variants of the classical {\it one-way (deterministic) finite automata}
(1FA or 1DFA) and the second one is to consider quantum variants of the
classical {\it two-way finite automata} (2FA or 2DFA). Already the very first
attempts to introduce such models, by Moore and Crutchfields \cite{Moore} and
Kondacs and Watrous \cite{Kon97} demonstrated that in spite of the fact that in the
classical case, 1FA
and 2FA have the same recognition power, this is not so for their quantum
variations. Moreover, already the first important model of {\it two-way
quantum finite automata} (2QFA), namely that introduced by Kondacs and Watrous,
demonstrated that very natural quantum variants of 2FA are much too
powerful - they can recognize even some non-context free languages and
are actually not really finite in a strong sense. It started to be
therefore of interest to introduce and explore some ``less quantum"
variations of 2FA and their power \cite{Amb06,Amb-F,AJ,AN,ANT,ACB,Bro,LiQ2,LiQ3,LiQ5,MP,Pas,QM,Qiu4,AY2,AY3,AY4}.

A very natural ``hybrid" quantum variations of 2FA, namely, {\it two-way
quantum automata with quantum and classical states}  (2QCFA) were
introduced by Ambainis and Watrous \cite{AJ}. Using this model they were able
to show in an elegant way that an addition of a single qubit to a
classical model can enormously increase power of automata. A 2QCFA is
essentially a classical 2FA augmented with a quantum memory of constant
size (for states in a fixed Hilbert space) that does not depend on the
size of the (classical) input. In spite of such a restriction, 2QCFA have
been shown to be more powerful than {\it two-way probabilistic finite automata}
(2PFA) \cite{AJ}.

Because of the simplicity, elegance and interesting properties of the
2QCFA model, as well as its natural character, it seems to be both useful and interesting  to
explore what such a new ``hybrid" approach will provide in case of one-way
finite automata and this we will do in this paper by introducing and
exploring 1QCFA.

In the first part of the paper, 1QCFA are introduced formally and it is
shown that they can be used to simulate a variety of other models of
finite automata. Namely, 1DFA, 1PFA, measure-once 1QFA (MO-1QFA) \cite{Kon97}, measure-many 1QFA (MM-1QFA) \cite{Kon97}
and {\it one-way quantum finite
automata with control language} (1QFACL) \cite{ACB}. Of a special interest
is the way how 1QCFA can simulate  1QFACL - an interesting  model the
behavior of which is, however,  quite special. Our simulation of 1QFACL
by 1QCFA allows to see behavior of 1QFACL in a quite transparent way. We
also explore several closure properties of the family of languages
accepted by 1QCFA.  Finally, we derive a result concerning the state
complexity of 1QCFA that also demonstrates a merit of this new model.
 Namely we show that for any prime $m$ and any
 $\varepsilon_1>0$, there exists a language $L_{m}$ than cannot be recognized by any MM-1QFA with bounded error $\frac{7}{9}+\epsilon_1$,
 and any 1PFA recognizing it has at last $m$ states, but $L_{m}$ can be recognized by a 1QCFA for any error bound $\epsilon>0$ with
 $\bf{O}(\log{m})$ quantum states and 12 classical states.

The rest of the paper is organized as follows. Definitions of all
automata models explored in the paper are presented in Section 2. In
Section 3 we show how several other models of finite automata can be
simulated by 1QCFA.  We also explore several closure properties of the
family of languages accepted by 1QCFA in Section 4. In Section 5 the
above mentioned succinctness result is proved and the last section
contains just few concluding remarks.

\section{Basic models of classical and quantum finite
 automata}

 In the first part of this section we formally introduce those basic models
of finite automata we will refer to in the rest of the paper and in the
second part of this  section, we formally introduce as a new model 1QCFA.
 Concerning the basics
of quantum computation we refer the reader to \cite{Gru99,Nie} and concerning
the basic properties of the automata models introduced in the following we
refer the reader to \cite{Gru99,Gru00,Hop,AP,LiQ4}.

\subsection{Basic models of classical and quantum finite
 automata}
In this subsection, we recall the definitions of DFA, 1PFA, MO-1QFA, MM-1QFA and 1QFACL.
\begin{definition}
A {\it deterministic finite automaton} (DFA) $\mathcal{A}$ is specified by a 5-tuple
\begin{equation}
\mathcal{A}=(S,\Sigma,\delta,s_0, S_{acc}),
\end{equation}
where:
\begin{itemize}
\item [1.]$S$ is a finite set of classical states;
\item [2.]$\Sigma$ is a finite set of input symbols;
\item [3.]$s_{0}\in S$ is the initial state of the machine;
\item [4.]$S_{acc}\subset S$ is the set of accepting states;
\item [5.]$\delta$ is the transition function:
\begin{equation}
\delta:S\times\Sigma \rightarrow S.
\end{equation}
\end{itemize}
\end{definition}

Let $w=\sigma_1\sigma_2\cdots\sigma_n$  be a string over the alphabet $\Sigma$. The automaton $\mathcal{A}$
accepts the string $w$ if a sequence of states, $r_0, r_1, \cdots, r_n$, exists in $S$ with the following conditions:
\begin{enumerate}
  \item $r_0=s_0$;
  \item $r_{i+1}=\delta(r_i, \sigma_{i+1})$, for $i=0, \cdots, n-1$;
  \item $r_n\in S_{acc}$.
\end{enumerate}
DFA recognize exactly the set of {\it regular languages} (RL).

\begin{definition}
A {\it one-way probabilistic finite automata} (1PFA) $\mathcal{A}$ is specified by a 5-tuple
\begin{equation}
\mathcal{A}=(S,\Sigma,\delta,s_{1},S_{acc}),
\end{equation}
where:
\begin{itemize}

\item[1.] $S=\{s_1,s_2,\cdots, s_{n}\}$ is a finite set of classical states;

\item[2.] $\Sigma$ is a finite set of input symbols; $\Sigma$ is then extended to the tape symbol
set $\Gamma=\Sigma\cup\{\ |\hspace{-1.5mm}c,\$\}$, where $\ |\hspace{-1.5mm}c\notin \Sigma $ is called the left end-marker and
$\$\notin \Sigma$ is called the right end-marker;

\item[3.] $s_{1}\in S$ is the initial state;

\item[4.] $S_{acc}\subset S$ is the set of
accepting states;

\item[5.] $\delta$ is the transition function:

\begin{equation}
\delta:S\times \Gamma \times S  \rightarrow \{0, 1/2, 1\}.
\end{equation}

Note: For any $s\in S$ and any $\sigma\in\Gamma$, $\delta(s,\sigma, t)$ is a so-called coin-tossing
distribution\footnote{A coin-tossing distribution on a finite set $Q$ is a mapping $\phi$ from
$Q$ to \{0, 1/2, 1\} such that $\sum_{q\in Q}\phi(q)=1$, which means choosing $q$ with probability $\phi(q)$.} on $S$ such that $\sum_{t\in S}\delta(s,\sigma,t)=1$.
For example, $\delta(s, \sigma, t)$ means that if $\mathcal{A}$ is in the state $s$ with the tape head scanning the symbol $\sigma$,
then the automaton enters the state $t$ with probability
         $\delta(s,\sigma,t)$.

\end{itemize}
\end{definition}

For an input string $\omega=\sigma_1\ldots\sigma_l$, the probability
distribution on the states of $\mathcal{A}$ during its acceptance process can be
traced using $n$-dimensional vectors. It is assumed that $\mathcal{A}$ starts to
process the input word written on the input tape as $w=\ |\hspace{-1.5mm}c\ \omega\$$
and let $v_0=(1,0,\ldots,0)^T_{n\times 1}$ denote the initial probability
distribution on states. If, during the acceptance process, the current
probability distribution vector is $v$ and a tape symbol $\sigma$ is
read, then the new state probability distribution vector will be, after
the automaton step, $u=A_{\sigma}v$, where $A_{\sigma}$ is such a matrix
that $A_{\sigma}(i,j)=\delta(s_j,\sigma,s_i)$. We then use $v_{|w|}=A_{\$}A_{\sigma_l}\cdots A_{\sigma_1}A_{\ |\hspace{-1mm}c }v_0$
to denote the final
probability distribution on states in case of the input $\omega$. The accepting probability of $\mathcal{A}$ with input $\omega$ is then
\begin{equation}
Pr[\mathcal{A}\  {\it accepts}\  \omega]=\sum_{s_i\in S_{acc}} v_{|w|}(i),
\end{equation}
where  $v_{|w|}(i)$ denotes the $i$th entry of $v_{|w|}$.

\begin{definition}
A {\it measurement-once one-way quantum
         automaton} (MO-1QFA) $\mathcal{A}$ is specified by a 5-tuple
\begin{equation}
\mathcal{A}=(Q,\Sigma,\Theta, |q_0\rangle,Q_{acc}),
\end{equation}
where:
\begin{itemize}
\item[1.] $Q$ is a finite set of quantum orthogonal states;

\item[2.] $\Sigma$ is a finite set of input symbols; $\Sigma$ is then extended to the tape symbol
set $\Gamma=\Sigma\cup\{\ |\hspace{-1.5mm}c,\$\}$, where $\ |\hspace{-1.5mm}c\notin \Sigma $ is called the left end-marker
and $\$\notin \Sigma$ is called the right end-marker;

\item[3.] $|q_{0}\rangle\in Q$ is the initial quantum state;

\item[4.] $Q_{acc}\subset Q$ is the set of accepting quantum states;

\item[5.] For each $\sigma\in \Gamma$, a unitary
 transformation $\Theta_{\sigma}$ is defined on the Hilbert space spanned by the states
  from $Q$.

\end{itemize}
\end{definition}

We describe the acceptance process of $\mathcal{A}$ for any given
input string $\omega=\sigma_1\cdots \sigma_l$ as follows. The automaton $\mathcal{A}$ states with the initial state $|q_0\rangle$,
reading the left-marker $\ |\hspace{-1.5mm}c$.  Afterwards, the unitary transformation $\Theta_{|\hspace{-1mm}c }$ is applied on $|q_0\rangle$.
After that, $\Theta_{|\hspace{-1mm}c }|q_0\rangle$ becomes the current state and the automaton reads $\sigma_1$.
The process continues until $\mathcal{A}$ reads $\$$
and ends in the state $|\psi_{\omega}\rangle=\Theta_{\$}\Theta_{\sigma_l}\cdots \Theta_{\sigma_1} \Theta_{|\hspace{-1mm}c }|q_0\rangle$.
Finally, a measurement is performed on $|\psi_{\omega}\rangle$ and the accepting probability of $\mathcal{A}$ on the input $\omega$ is equal to
\begin{equation}
Pr[\mathcal{A}\  {\it accepts}\  \omega]=\langle \psi_{\omega}|P_a|\psi_{\omega}\rangle=|| P_a|\psi_{\omega}\rangle ||^2,
\end{equation}
where $P_a=\sum_{q\in Q_{acc}}|q\rangle\langle q|$ is the projection onto the subspace spanned by $\{|q\rangle:|q\rangle\in Q_{acc}\}$.

\begin{definition}
A {\it measurement-many one-way quantum
         automaton} (MM-1QFA) $\mathcal{A}$
 is specified by a 6-tuple
\begin{equation}
\mathcal{A}=(Q,\Sigma,\Theta, |q_0\rangle,Q_{acc},Q_{rej}),
\end{equation}
where $Q$, $\Sigma$, $\Theta$, $|q_0\rangle$, $Q_{acc}$, and the tape symbol
set $\Gamma$ are the same as those defined above in an MO-1QFA. $Q_{rej}\subset Q$ is the set of rejecting states.
\end{definition}

For any given input string $\omega=\sigma_1\cdots \sigma_l$, the acceptance process is similar to that of MO-1QFA
except that after every transition, MM-1QFA $\mathcal{A}$ measures its state with respect to the three subspaces that are spanned
by the three subsets $Q_{acc}$, $Q_{rej}$ and $Q_{non}$, respectively, where $Q_{non}=Q\setminus(Q_{acc}\cup Q_{rej})$.
In other words, the projective measurement consists of $\{P_a, P_r, P_n\}$, where $P_a=\sum_{q\in Q_{acc}}|q\rangle\langle q|$,
$P_r=\sum_{q\in Q_{rej}}|q\rangle\langle q|$ and $P_n=\sum_{q\in Q_{non}}|q\rangle\langle q|$.
The accepting and rejecting probability are given as follows (for convenience, we denote $\sigma_0=\ |\hspace{-1.5mm}c$ and $\sigma_{l+1}=\$$):
\begin{equation}
Pr[\mathcal{A}\  {\it accepts}\  \omega]=\sum_{k=0}^{l+1}||P_a\Theta_{\sigma_k}\prod_{i=0}^{k-1}(P_n\Theta_{\sigma_i})|q_0\rangle ||^2,
\end{equation}
\begin{equation}
Pr[\mathcal{A}\ {\it reject}\  \omega]=\sum_{k=0}^{l+1}||P_r\Theta_{\sigma_k}\prod_{i=0}^{k-1}(P_n\Theta_{\sigma_i})|q_0\rangle ||^2.
\end{equation}
An important convention: In this paper we define $\prod_{i=1}^nA_i=A_nA_{n-1}\cdots A_1$, instead of the usual one $A_1A_2\cdots A_n$.

\begin{definition}
A {\it one-way quantum finite
automata with control language} (1QFACL) $\mathcal{A}$
 is specified by as a 6-tuple
\begin{equation}
\mathcal{A}=(Q,\Sigma,\Theta, |q_0\rangle, \mathcal{O}, \mathcal{L}),
\end{equation}
where:
\begin{itemize}
\item[1.] $Q$, $\Sigma$, $\Theta$, $|q_0\rangle$ and the tape symbol
set $\Gamma$ are the same as those defined above in an MO-1QFA;

\item[2.] $\mathcal{O}$ is an observable with the set of possible eigenvalues $\mathcal{C}=\{c_1, \cdots, c_s\}$ and
the projector set $\{P(c_i):i=1, \cdots, s\}$ where $P(c_i)$ denotes the projector onto the eigenspace corresponding to $c_i$;
\item[3.] $\mathcal{L}\subset \mathcal{C}^*$ is a regular language (called here as a control language).
\end{itemize}
\end{definition}

The input word $\omega=\sigma_1\cdots \sigma_l$ to 1QFACL $\mathcal{A}$ is in the form:
$w=\ |\hspace{-1.5mm}c\omega\$$ (for convenience, we denote $\sigma_0=|\hspace{-1.5mm}c$ and $\sigma_{l+1}=\$$).
Now, we define the behavior of $\mathcal{A}$ on the word $w$. The computation starts in the state $|q_0\rangle$, and then the
transformations associated with symbols in the word  $w$ are applied in succession. The transformation associated
with any symbol $\sigma\in\Gamma$ consists of two steps:

\begin{enumerate}
\item[1.] Firstly, $\Theta_{\sigma}$ is applied to the current state
$|\phi\rangle$ of $\mathcal{A}$, yielding the new state
$|\phi'\rangle=\Theta_{\sigma}|\phi\rangle$.

\item[2.] Secondly, the observable $\mathcal{O}$ is measured on
$|\phi'\rangle$. According to quantum mechanics principle, this
measurement yields result $c_k$ with probability
$p_k=||P(c_k)|\phi'\rangle||^2$, and the state of $\mathcal{A}$
collapses to $P(c_k)|\phi'\rangle/\sqrt{p_k}$.

\end{enumerate}

Thus, the computation on the word $w$ leads to a string
$y_0y_1\dots y_{l+1}\in \mathcal{C}^{*}$ with probability $p(y_0y_1\dots
y_{l+1}|\sigma_0\sigma_1\dots \sigma_{l+1})$ given by
\begin{equation}
p(y_0y_1\dots
y_{l+1}|\sigma_0\sigma_1\dots \sigma_{l+1})=||\prod^{l+1}_{i=0}(P(y_i)\Theta_{\sigma_i})|q_0\rangle||^2.
\end{equation}
A computation leading to a word $y\in \mathcal{C}^{*}$ is said to be accepted if $y\in \mathcal{L}$.
 Otherwise, it is rejected. Hence, the accepting probability of 1QFACL $\mathcal{A}$ is defined as:
\begin{equation}
Pr[\mathcal{A}\  {\it accepts}\  \omega]=\sum_{y_0y_1\dots y_{l+1}\in\mathcal{
L}}p(y_0y_1\dots y_{l+1}|\sigma_0\sigma_1\dots \sigma_{l+1})
\end{equation}

\subsection{Definition of 1QCFA}
In this subsection we introduce 1QCFA
and its acceptance process formally and in details.

2QCFA were first introduced by Ambainis and Watrous \cite{AJ}, and then studied by Qiu, Yakaryilmaz and etc. \cite{Qiu2,AY2,Zheng}.
1QCFA are the one-way version of 2QCFA.  Informally, we describe a 1QCFA as a DFA which has access to a quantum
 memory of a constant size (dimension), upon which it performs
quantum transformations and measurements. Given a finite set of quantum states $Q$, we denote by $\mathcal{H}(Q)$
the Hilbert space spanned by $Q$. Let
$\mathcal{U}(\mathcal{H}(Q))$ and $\mathcal{O}(\mathcal{H}(Q))$
denote the sets of unitary operators and projective measurements over $\mathcal{H}(Q)$, respectively.

\begin{definition}
A {\it one-way finite automata with quantum and classical states} (1QCFA) $\mathcal{A}$ is specified by a 10-tuple
\begin{equation}
\mathcal{A}=(Q,S,\Sigma,\Theta,\Delta,\delta,|q_{0}\rangle,s_{0},S_{acc},S_{rej})
\end{equation}

where:
\begin{itemize}
\item[1.] $Q$ is a finite set of quantum states;
\item[2.] $S$, $\Sigma$ and the tape symbol
set $\Gamma$ are the same as those defined above in a 1PFA;

\item[3.] $|q_{0}\rangle\in Q$ is the initial quantum state;

\item[4.] $s_{0}\in S$ is the initial classical state;

\item[5.] $S_{acc}\subset S$ and $S_{rej}\subset S$ are the sets of
classical accepting and rejecting states, respectively;

\item[6.] $\Theta$ is the mapping:
\begin{equation}
\Theta:S\times \Gamma \rightarrow
 \mathcal{U}(\mathcal{H}(Q)),
\end{equation}
assigning to each pair $(s,\gamma)$ a
 unitary transformation;

\item[7.] $\Delta$ is the mapping:
\begin{equation}
\Delta:S\times \Gamma \rightarrow
 \mathcal{O}(\mathcal{H}(Q)),
\end{equation}
where each $\Delta(s,\gamma)$ corresponds to a projective measurement
(a projective measurement will be taken each time a unitary
   transformation is applied; if we do not need a measurement,
we denote that $\Delta(s,\gamma)=I$, and we assume the result of the measurement to be $\varepsilon$ with certainty);

\item[8.] $\delta$ is a special transition function of classical states.
Let the results set of the measurement be $\mathcal{C}=\{c_{1},c_{2},\dots$,
$c_{s}\}$, then
\begin{equation}
\delta:S\times \Gamma \times \mathcal{C}\rightarrow
S,
\end{equation}
 where $\delta(s,\gamma)(c_{i})=s'$ means
that if a tape symbol $\gamma\in \Gamma$ is being
 scanned and the projective measurement result is $c_{i}$, then the  state $s$ is changed to $s'$.
\end{itemize}
\end{definition}

Given an input $\omega=\sigma_1\cdots\sigma_l$, the word on the tape will be
$w=|\hspace{-1.5mm}c\ \omega\$$ (for convenience, we denote $\sigma_0=|\hspace{-1.5mm}c$ and $\sigma_{l+1}=\$$).
Now, we define the behavior of 1QCFA $\mathcal{A}$ on the word $w$.
The computation starts in the classical state $s_0$ and the quantum state $|q_0\rangle$, then the
transformations associated with symbols in the word  $\sigma_0\sigma_1\cdots, \sigma_{l+1}$ are applied in succession.
The transformation associated
with a state $s\in S$ and a symbol $\sigma\in\Gamma$ consists of three steps:
\begin{enumerate}
\item[1.] Firstly, $\Theta(s,\sigma)$ is applied to the current quantum state
$|\phi\rangle$, yielding the new state
$|\phi'\rangle=\Theta(s,\sigma)|\phi\rangle$.

\item[2.] Secondly, the observable $\Delta(s,\sigma)=\mathcal{O}$ is measured on
$|\phi'\rangle$. The set of possible results is $\mathcal{C}=\{c_1, \cdots, c_s\}$.
According to such a quantum mechanics principle, such a
measurement yields the classical outcome $c_k$ with probability
$p_k=||P(c_k)|\phi'\rangle||^2$, and the quantum state of $\mathcal{A}$
collapses to $P(c_k)|\phi'\rangle/\sqrt{p_k}$.

\item[3.] Thirdly, the current classical state $s$ will be changed to $\delta(s,\sigma)(c_k) =s'.$

\end{enumerate}
An input word $\omega$ is assumed to be accepted (rejected) if and only if the classical state after scanning $\sigma_{l+1}$
is an accepting (rejecting) state. We assume that $\delta$ is well defined so that 1QCFA $\mathcal{A}$ always accepts or
rejects at the end of the computation.

Let $L\subset \Sigma^*$ and $0\leq\epsilon<1/2$, then 1QCFA $\mathcal{A}$ recognizes $L$ with bounded error $\epsilon$ if
\begin{itemize}
\item[1.] For any $\omega\in L$, $Pr[\mathcal{A}\ {\it accepts}\  \omega]\geq 1-\epsilon$, and
\item[2.] For any $\omega\notin L$, $Pr[\mathcal{A}\ {\it rejects}\  \omega]\geq 1-\epsilon$.
\end{itemize}

\section{Simulation of other models by 1QCFA}
In this section, we prove that the following automata models can be simulated by 1QCFA: DFA, 1PFA, MO-1QFA, MM-1QFA and 1QFACL.

\begin{theorem} \label{th1}
Any $n$ states DFA $\mathcal{A}=(S,\Sigma,\delta,s_0, S_{acc})$ can be simulated by a
1QCFA $\mathcal{A}'=(Q',S',\Sigma',\Theta',\Delta',\delta',|q_{0}\rangle',s_{0}',S_{acc}',S_{rej}')$
with $1$ quantum state and $n+1$ classical states.
\end{theorem}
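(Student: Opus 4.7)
The plan is to observe that with only one quantum basis state the entire quantum apparatus becomes vacuous: the Hilbert space $\mathcal{H}(Q')$ is one-dimensional, the only unitary is the identity (up to a global phase that is invisible to measurement), and any projective measurement on it has a single outcome with probability one. Consequently, the 1QCFA's dynamics are entirely determined by the classical transition function $\delta'$, which I will design to mimic $\delta$ step-for-step.

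Concretely, I would set $Q'=\{|q_0\rangle'\}$, take $\Theta'(s,\gamma)=I$ and $\Delta'(s,\gamma)=I$ (so that every measurement returns the dummy outcome $\varepsilon$ with certainty), and choose the classical state set
\begin{equation}
S'=\{s_0'\}\cup S,
\end{equation}
which has exactly $n+1$ elements. The initial classical state is the fresh symbol $s_0'$, used only to absorb the left end-marker. I would define $\delta'$ by: $\delta'(s_0',|\hspace{-1.5mm}c,\varepsilon)=s_0$; $\delta'(s,\sigma,\varepsilon)=\delta(s,\sigma)$ for every $s\in S$ and $\sigma\in\Sigma$; and $\delta'(s,\$,\varepsilon)=s$ for every $s\in S$. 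Finally, I would declare $S_{acc}'=S_{acc}$ and $S_{rej}'=S\setminus S_{acc}$, so that the accepting/rejecting partition of the classical state space is inherited directly from the DFA.

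Correctness then follows from a short induction on the length of the prefix read. After reading $|\hspace{-1.5mm}c$, the 1QCFA is in classical state $s_0$, matching the DFA's initial configuration. If after reading $\sigma_1\cdots\sigma_k$ both machines are in the same state $r_k$, then after reading $\sigma_{k+1}$ the 1QCFA is in $\delta'(r_k,\sigma_{k+1},\varepsilon)=\delta(r_k,\sigma_{k+1})=r_{k+1}$, again matching. The $\$$-transition is the identity, so the final classical state equals $r_n$, and the 1QCFA accepts iff $r_n\in S_{acc}$ iff $\mathcal{A}$ accepts $\omega$.

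There is no real obstacle here; the only point requiring a sentence of care is confirming that the trivial $\Theta'$ and $\Delta'$ are legitimate under the formal definition of a 1QCFA (which explicitly allows $\Delta(s,\gamma)=I$ with deterministic outcome $\varepsilon$). The state count is $|S'|=n+1$, exactly as claimed, and the quantum register uses a single basis state.
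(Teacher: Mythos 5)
Your proposal is correct and follows essentially the same route as the paper: trivialize the quantum component (one basis state, identity $\Theta'$ and $\Delta'$ with the dummy outcome $\varepsilon$) and run the DFA inside the classical transition function, using one auxiliary classical state to reach the count $n+1$. The only difference is bookkeeping: the paper spends its extra state on a fresh rejecting sink $s_r$ entered on $\$$ from non-accepting states (so $S_{rej}'=\{s_r\}$), whereas you spend it on a fresh initial state absorbing the left end-marker and set $S_{rej}'=S\setminus S_{acc}$, which is equally legitimate under the paper's convention that acceptance or rejection is decided only by the classical state after scanning $\$$.
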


\begin{proof}
Actually, if we do not use the quantum component of 1QCFA, the automaton is reduced to a DFA.
Let $Q'=\{|q_{0}\rangle'\}$, $S'=S\cup \{s_r\}$, $\Sigma'=\Sigma$, $s_{0}'=s_0$, $S_{acc}'=S_{acc}$ and $S_{rej}'=\{s_{r}\}$.
For any $s\in S$ and any $\sigma\in \Sigma$, let $\Theta(s,\sigma)=I$,  $\Delta'(s,\sigma)=I$,
and the classical transition function $\delta'$ is defined as follows:
\begin{equation}
  \delta'(s,\sigma)(c)=\left\{%
\begin{array}{ll}
    s,                 & \sigma=\ |\hspace{-1.5mm}c;\\
    \delta(s,\sigma),\ \ \ \ \ \   & \sigma\in \Sigma,\\
    s,                 & \sigma=\$, s\in S_{acc}';\\
    s_{r},            & \sigma=\$, s\notin S_{acc}'.\\

\end{array}%
\right.
\end {equation}
where $c$ is the measurement result.

\end{proof}

\begin{theorem}
Any $n$ states 1PFA
$\mathcal{A}^{1}=(S^{1},\Sigma^{1},\delta^{1},s_{1}^{1},S_{acc}^{1})$
can be simulated by a 1QCFA
$\mathcal{A}^{2}=(Q^{2},S^{2},\Sigma^{2},\Theta^{2},\Delta^{2},\delta^{2},|q_{0}\rangle^{2},s_{0}^{2},S_{acc}^{2},S_{rej}^{2})$
with $2$ quantum states and $n+1$ classical states.
\end{theorem}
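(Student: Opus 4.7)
The plan is to use the single qubit of the 1QCFA as a source of fair coin flips, since a 1PFA transition is either deterministic or chooses uniformly between two successor states (the only way a coin-tossing distribution on $S^1$ can be formed from values in $\{0,1/2,1\}$). I would take $Q^2=\{|0\rangle,|1\rangle\}$ with $|q_0\rangle^2=|0\rangle$, $S^2=S^1\cup\{s_r\}$ with $s_0^2=s_1^1$, $S_{acc}^2=S_{acc}^1$, and $S_{rej}^2=\{s_r\}$. The extra classical state $s_r$ plays the same role as in Theorem~\ref{th1}: on reading $\$$ the classical state either remains (if already accepting) or is switched to $s_r$.

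For each pair $(s,\sigma)$ with $\sigma\in\Sigma\cup\{|\hspace{-1.5mm}c\}$, I would define the quantum and classical actions according to the shape of $\delta^1(s,\sigma,\cdot)$. If $\delta^1(s,\sigma,t)=1$ for a unique $t\in S^1$, set $\Theta^2(s,\sigma)=I$, $\Delta^2(s,\sigma)=I$, and $\delta^2(s,\sigma)(\varepsilon)=t$. If $\delta^1(s,\sigma,t_1)=\delta^1(s,\sigma,t_2)=1/2$ for two distinct $t_1,t_2$, set $\Theta^2(s,\sigma)=H$ (the Hadamard transform on $\mathcal{H}(Q^2)$), take $\Delta^2(s,\sigma)$ to be the standard-basis measurement with projectors $P_0=|0\rangle\langle 0|$, $P_1=|1\rangle\langle 1|$, and put $\delta^2(s,\sigma)(0)=t_1$, $\delta^2(s,\sigma)(1)=t_2$.

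The central observation making this work is that after every step the qubit sits in a computational basis state, because $\Delta^2(s,\sigma)$ is either the identity or a standard-basis measurement. Whenever we need a coin flip, the qubit is in some $|b\rangle$ with $b\in\{0,1\}$, and
\begin{equation}
H|b\rangle=\frac{1}{\sqrt{2}}\bigl(|0\rangle+(-1)^{b}|1\rangle\bigr),
\end{equation}
so the subsequent measurement returns $0$ and $1$ each with probability $1/2$ regardless of $b$, and independently of all previous outcomes. Thus each simulated coin flip is fair and independent, exactly matching the probabilistic choice of $\mathcal{A}^1$.

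The proof is then completed by an induction on the length of the read prefix showing that the distribution of the classical component of $\mathcal{A}^2$ coincides with the distribution over states of $\mathcal{A}^1$ after reading the same prefix; the $\$$ step converts this equality into equality of accepting probabilities via the $s_r$ convention. The only genuine subtlety — and the step I expect to be the main conceptual point rather than a routine calculation — is verifying the independence claim above, i.e.\ that the possibly entangled history of previous measurement outcomes does not bias the next coin flip, which is precisely what the computation of $H|b\rangle$ establishes. Counting states yields $|Q^2|=2$ and $|S^2|=n+1$, as required.
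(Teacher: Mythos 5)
Your proposal is correct and follows essentially the same route as the paper: the qubit is used solely as a fair-coin generator via a Hadamard transform followed by a standard-basis measurement, with the classical part mirroring the DFA simulation of Theorem~\ref{th1} and the extra state $s_r$ handling rejection at $\$$. Your explicit check that $H|b\rangle$ yields an unbiased, history-independent outcome for either basis state $|b\rangle$ is exactly the content of the paper's coin-flip lemma, just stated with a little more care about independence.
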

\begin{proof}
A 1PFA is essentially a DFA augmented with a fair coin-flip component. In every transition, 1PFA can use a fair coin-flip or not freely.
Using the quantum component, a 1QCFA can simulate the fair coin-flip perfectly.
\begin{lemma}
A fair coin-flip can be simulate by 1QCFA $\mathcal{A}$ with two quantum states, a unitary operation and a projective measurement.
\end{lemma}
\begin{proof}
 The automaton
$\mathcal{A}$ simulates a coin-flip according to the following
transition functions, with $|p_0\rangle$ as the starting
 quantum state. We use two orthogonal basis states $|p_0\rangle$ and $|p_1\rangle$.  Let a projective measurement $M=\{P_0,P_1\}$ be defined by
\begin{equation}
 P_0=|p_0\rangle\langle p_0|, P_1=|p_1\rangle\langle p_1|.
\end {equation}
 The results 0 and 1 represent the results of coin-flip ``head" and
``tail", respectively. The corresponding unitary operation
   will be
\begin{equation}
  U=\left(%
\begin{array}{cc}
  \frac{1}{\sqrt{2}} &  \frac{1}{\sqrt{2}} \\
   \frac{1}{\sqrt{2}} &  -\frac{1}{\sqrt{2}} \\
\end{array}%
 \right).
\end {equation}
 This operator changes the state $|p_0\rangle$ or $|p_1\rangle$
 to a superposition state $|\psi\rangle$ or $|\phi\rangle$, respectively, as
 follows:
\begin{equation}
|\psi\rangle=\frac{1}{\sqrt{2}}(|p_0\rangle+|p_1\rangle),\ \  |\phi\rangle=\frac{1}{\sqrt{2}}(|p_0\rangle-|p_1\rangle).
\end{equation}
When measuring $|\psi\rangle$ or $|\phi\rangle$
 with $M$, we will get the result  0 or 1 with probability
 $\frac{1}{2}$, respectively. This is similar to a coin-flip
 process. If the result is 0, we simulate ``head" result of the coin-flip; if the result
 is 1, we simulate ``tail" result of the coin-flip. So the Lemma is proved.
\end{proof}

If the current state of 1PFA $\mathcal{A}^{1}$ is $s$ and the
scanning symbol is $\sigma\in \Sigma$, $\mathcal{A}^{1}$ makes a
coin-flip. The current state of $\mathcal{A}^{1}$ will change to
$t_1$ or $t_2$, in both cases with probability $\frac{1}{2}$.  We
use a 1QCFA $\mathcal{A}^{2}$ to simulate this step as follows:
\begin{enumerate}
  \item [1.] Use the quantum component of 1QCFA $\mathcal{A}^{2}$ to simulate a fair coin-flip. We assume the outcome to be $0$ or $1$.
  \item [2.] We define $\delta^{2}(s,\sigma)(0)=t_1$ and $\delta^{2}(s,\sigma)(1)=t_2$.
\end{enumerate}
The other parts of the simulation are similar to the one described in the proof of Theorem \ref{th1}.
\end{proof}

\begin{theorem}
Any $n$ quantum states MO-1QFA $\mathcal{A}^{1}=(Q^{1},\Sigma^{1},\Theta^{1}, \linebreak[0] |q_0\rangle^{1},\linebreak[0] Q_{acc}^{1})$ can be simulated by a 1QCFA $\mathcal{A}^{2}=(Q^{2},S^{2},\Sigma^{2},\Theta^{2},\Delta^{2},\delta^{2},|q_{0}\rangle^{2},s_{0}^{2},\linebreak[0]S_{acc}^{2},\linebreak[0]S_{rej}^{2})$ with $n$ quantum states and $3$ classical states.
\end{theorem}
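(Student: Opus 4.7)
The plan is to build $\mathcal{A}^2$ so that it mimics $\mathcal{A}^1$ exactly: perform all unitaries of $\mathcal{A}^1$ in order on the quantum register without any intermediate collapse, and defer the single MO-1QFA measurement to the step that reads the right end-marker.

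More concretely, I would set $Q^2=Q^1$, $|q_0\rangle^2=|q_0\rangle^1$, $\Sigma^2=\Sigma^1$, and take three classical states $S^2=\{s_0^2,s_{acc}^2,s_{rej}^2\}$ with $S_{acc}^2=\{s_{acc}^2\}$ and $S_{rej}^2=\{s_{rej}^2\}$. For every $\sigma\in\Gamma\setminus\{\$\}$ I would define
\begin{equation}
\Theta^2(s_0^2,\sigma)=\Theta^1_{\sigma},\qquad \Delta^2(s_0^2,\sigma)=I,\qquad \delta^2(s_0^2,\sigma)(\varepsilon)=s_0^2,
\end{equation}
so that the classical control stays in $s_0^2$ while the quantum register evolves exactly as in $\mathcal{A}^1$. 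At the right end-marker I would apply the MO-1QFA's final unitary, $\Theta^2(s_0^2,\$)=\Theta^1_{\$}$, and then take $\Delta^2(s_0^2,\$)$ to be the two-outcome projective measurement $\{P_a,I-P_a\}$ with $P_a=\sum_{q\in Q^1_{acc}}|q\rangle\langle q|$, with classical outcomes $c_a,c_r$. The classical rule is $\delta^2(s_0^2,\$)(c_a)=s_{acc}^2$ and $\delta^2(s_0^2,\$)(c_r)=s_{rej}^2$; for completeness, $s_{acc}^2$ and $s_{rej}^2$ are made absorbing (the definition only requires $\delta^2$ to be well-defined at the end of the computation).

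The correctness check is then a one-line calculation. For input $\omega=\sigma_1\cdots\sigma_l$, since every $\Delta^2$ before the $\$$-step is the identity, the quantum state just before the final measurement equals
\begin{equation}
|\psi_\omega\rangle=\Theta^1_{\$}\Theta^1_{\sigma_l}\cdots\Theta^1_{\sigma_1}\Theta^1_{|\hspace{-1mm}c}|q_0\rangle^1,
\end{equation}
which is precisely the MO-1QFA state before its terminal measurement. The 1QCFA ends in $s_{acc}^2$ with probability $\|P_a|\psi_\omega\rangle\|^2$, matching $\Pr[\mathcal{A}^1\text{ accepts }\omega]$ exactly.

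Since there is no real obstruction here, the only thing worth being careful about is the interface between the quantum and classical components in $\mathcal{A}^2$, namely making sure that (i) absence of an intermediate measurement is realised by the convention $\Delta^2=I$ (which the definition of 1QCFA explicitly allows, returning the trivial outcome $\varepsilon$), and (ii) the three classical states suffice: one ``working'' state $s_0^2$ during the read of $\,|\hspace{-1.5mm}c\omega$, plus the two terminal states $s_{acc}^2,s_{rej}^2$ needed to translate the single quantum outcome into acceptance or rejection as required by the 1QCFA acceptance criterion.
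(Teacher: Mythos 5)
Your construction is correct and coincides with the paper's own proof: both keep the classical control in the single working state with $\Delta^2=I$ until the right end-marker, apply $\Theta^1_\sigma$ at each symbol, and defer the one projective measurement $\{P_a,\,I-P_a\}$ to the $\$$-step, routing the outcome to the accepting or rejecting classical state. No differences worth noting.
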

\begin{proof}
We use the quantum component of 1QCFA to simulate the evolution of
quantum states of MO-1QFA and use the classical states of 1QCFA to
calculate the accepting probability. Let $Q^{2}=Q^{1}$,
$S^{2}=\{s_{0}^{2}, s_a^{2}, s_r^{2}\}$, $\Sigma^{2}=\Sigma^{1}$,
$|q_{0}\rangle^{2}=|q_{0}\rangle^{1}$, $S_{acc}^{2}=\{s_a^{2}\}$
and $S_{rej}^{2}=\{s_r^{2}\}$. For any current classical state $s$
and scanning symbol $\sigma$, the quantum transition function is
defined to be
\begin{equation}
 \Theta^{2}(s,\sigma)=\Theta^{1}(\sigma).
\end {equation}
The measurement function is defined to be
\begin{equation}
  \Delta^{2}(s,\sigma)=\left\{%
\begin{array}{ll}
    I,                 & \sigma\neq\$;\\
     \{P_a, P_r\},\ \ \                  & \sigma=\$.\\

\end{array}%
\right.
\end {equation}
where $P_a=\sum_{q\in Q_{acc}}|q\rangle\langle q|$, $P_r=I-P_a$. If we assume the outcome to be $c_a$ or $c_r$, then
the classical transition function will be defined to be
\begin{equation}
  \delta^{2}(s,\sigma)(c)=\left\{%
\begin{array}{ll}
    s, \ \ \ \ \ \ \ \ \                 & \sigma\neq\$;\\
    s_a^2,                & \sigma=\$, c=c_a;\\
    s_r^2,                & \sigma=\$, c=c_r.\\
\end{array}%
\right.
\end {equation}

\end{proof}

\begin{theorem}\label{th4}
Any $n$ quantum states MM-1QFA $\mathcal{A}^1=(Q^1,\Sigma^1,\Theta^1, \linebreak[0] |q_0\rangle^1,\linebreak[0] Q_{acc}^1,\linebreak[0] Q_{rej}^1)$
can be simulated by a 1QCFA $\mathcal{A}^{2}=(Q^{2},S^{2},\Sigma^{2},\Theta^{2},\Delta^{2},\delta^{2},\linebreak[0] |q_{0}\rangle^{2},\linebreak[0] s_{0}^{2},\linebreak[0] S_{acc}^{2},\linebreak[0] S_{rej}^{2})$ with $n$ quantum  states and $3$ classical states.
\end{theorem}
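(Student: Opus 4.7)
The plan is to adapt the MO-1QFA construction from the previous theorem by performing the accept/reject/non-halt projective measurement after every symbol rather than only at the right end-marker. The quantum register will faithfully carry the MM-1QFA's quantum state along a non-halting trajectory, while three classical states will serve as a latch that records whether the MM-1QFA has already observed an accepting outcome, a rejecting outcome, or has so far remained in the non-halting subspace.

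Concretely, I would set $Q^{2}=Q^{1}$, $|q_{0}\rangle^{2}=|q_{0}\rangle^{1}$, $\Sigma^{2}=\Sigma^{1}$, and take $S^{2}=\{s_{0}^{2},s_{a}^{2},s_{r}^{2}\}$ with $S_{acc}^{2}=\{s_{a}^{2}\}$ and $S_{rej}^{2}=\{s_{r}^{2}\}$. Whenever the classical state is $s_{0}^{2}$ and the head scans $\sigma\in\Gamma$, the machine applies $\Theta^{2}(s_{0}^{2},\sigma)=\Theta^{1}(\sigma)$ followed by the three-outcome measurement $\Delta^{2}(s_{0}^{2},\sigma)=\{P_{a},P_{r},P_{n}\}$ inherited from the MM-1QFA, with $\delta^{2}(s_{0}^{2},\sigma)(c_{a})=s_{a}^{2}$, $\delta^{2}(s_{0}^{2},\sigma)(c_{r})=s_{r}^{2}$, and $\delta^{2}(s_{0}^{2},\sigma)(c_{n})=s_{0}^{2}$. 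Whenever the classical state is $s_{a}^{2}$ or $s_{r}^{2}$, the quantum operation is the identity, $\Delta^{2}$ is trivial, and $\delta^{2}$ self-loops, so that these classical states are absorbing and survive through the remainder of the tape, including $\$$.

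For correctness I would prove by induction on $k=0,1,\ldots,l+1$ that, after processing $\sigma_{0}\sigma_{1}\cdots\sigma_{k}$, the 1QCFA is in classical state $s_{a}^{2}$ with probability
\begin{equation}
\sum_{j=0}^{k}\Big\|P_{a}\,\Theta^{1}_{\sigma_{j}}\prod_{i=0}^{j-1}(P_{n}\Theta^{1}_{\sigma_{i}})|q_{0}\rangle\Big\|^{2},
\end{equation}
in classical state $s_{r}^{2}$ with the analogous sum with $P_{a}$ replaced by $P_{r}$, and in classical state $s_{0}^{2}$ with the quantum register equal (up to a conditioning normalisation) to $\prod_{i=0}^{k}(P_{n}\Theta^{1}_{\sigma_{i}})|q_{0}\rangle$. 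The inductive step is just the law of total probability applied to the three measurement branches, using the fact that the absorbing branches no longer contribute amplitude to the non-halting quantum register. Taking $k=l+1$ recovers exactly the MM-1QFA acceptance and rejection probabilities, which shows that $\mathcal{A}^{2}$ simulates $\mathcal{A}^{1}$ with $n$ quantum states and $3$ classical states.

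The step that requires the most care is the bookkeeping that separates the three classical branches: one must check that the probability mass ``leaking'' into $s_{a}^{2}$ and $s_{r}^{2}$ at step $j$ is precisely what the MM-1QFA formula assigns to its $j$-th partial measurement, and that conditioning on the $c_{n}$ branch preserves the correct (unnormalised) non-halting quantum state $P_{n}\Theta^{1}_{\sigma_{j}}|\phi\rangle$ needed for subsequent steps. Once this alignment is verified, the remaining details — idempotence of the absorbing states and the standard treatment of the end-markers $|\hspace{-1.5mm}c$ and $\$$ — are routine.
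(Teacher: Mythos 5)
Your construction is essentially identical to the paper's: same state sets, same per-symbol unitary followed by the $\{P_a,P_r,P_n\}$ measurement, same absorbing classical latch, and the same law-of-total-probability bookkeeping for the three branches. The only discrepancy is that you set $\delta^{2}(s_{0}^{2},\$)(c_{n})=s_{0}^{2}$, so a run whose quantum state remains in the non-halting subspace through the right end-marker finishes in a non-final classical state, which violates the paper's stipulation that a 1QCFA must accept or reject at the end of every computation; the paper routes that residual outcome to $s_{r}^{2}$, and with that one-line fix your argument coincides with theirs.
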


\begin{proof}
We use the quantum component of 1QCFA to simulate both the
evolution of quantum states of MM-1QFA and its projective
measurements. We use the classical states of 1QCFA to calculate
the accepting and rejecting probability. Let $Q^{2}=Q^{1}$,
$S^{2}=\{s_{0}^{2}, s_a^{2}, s_r^{2}\}$, $\Sigma^{2}=\Sigma^{1}$,
$|q_{0}\rangle^{2}=|q_{0}\rangle^{1}$, $S_{acc}^{2}=\{s_a^{2}\}$
and $S_{rej}^{2}=\{s_r^{2}\}$. For any current classical state $s$
and any scanning symbol $\sigma$, the quantum transition function
is defined to be
\begin{equation}
 \Theta^{2}(s,\sigma)=\Theta^{1}(\sigma).
\end {equation}
The measurement function is defined to be
\begin{equation}
\Delta^{2}(s,\sigma)=\{P_a, P_r, P_n\},
\end {equation}
where $P_a=\sum_{q\in Q_{acc}}|q\rangle\langle q|$, $P_r=\sum_{q\in Q_{rej}}|q\rangle\langle q|$ and $P_n=\sum_{q\in Q_{non}}|q\rangle\langle q|$. If we assume the classical outcomes to be $c_a$, $c_r$ or $c_n$, then the classical transition function will be defined to be

\begin{equation}
  \delta^{2}(s,\sigma)(c)=\left\{%
\begin{array}{ll}
    s_a^2,                 & s=s_a^2;\\
    s_r^2,                 & s=s_r^2;\\
    s_a^2,                 & s=s_0^2,c=c_a;\\
    s_r^2,                 & s=s_0^2,c=c_r;\\
    s_0^2, \ \ \ \ \ \ \ \ \ & s=s_0^2, c=c_n, \sigma\neq \$;\\
    s_r^2, \ \ \ \ \ \ \ \ \ & s=s_0^2, c=c_n, \sigma=\$.\\

\end{array}%
\right.
\end {equation}
\end{proof}

Although 1QFACL can accept all regular languages, their behavior
seems to be rather complicated. We prove that any 1QFACL can be
simulated by a 1QCFA with an easy to
 understand behavior.

\begin{theorem}
Any $n$ quantum states 1QFACL
$\mathcal{A}^1=(Q^1,\Sigma^1,\Theta^1, |q_0\rangle^1,\linebreak[0]
\mathcal{O}^1, \linebreak[0] \mathcal{L}^1)$, whose control
language $\mathcal{L}^1$ can be recognized by an $m$ states DFA
$\mathcal{A}=(S,\Sigma,\delta,\linebreak[0]s_0,\linebreak[0]
S_{acc})$, can be simulated by a 1QCFA
$\mathcal{A}^{2}=(Q^{2},S^{2},\Sigma^{2},\Theta^{2},\Delta^{2},\delta^{2},\linebreak[0]
|q_{0}\rangle^{2},\linebreak[0] s_{0}^{2},\linebreak[0]
S_{acc}^{2},\linebreak[0] S_{rej}^{2})$ with  $n$ quantum states
and $m+1$ classical states.
\end{theorem}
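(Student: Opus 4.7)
The plan is to exploit the close match between the two machines: a 1QFACL applies a unitary $\Theta^1_\sigma$, then performs the fixed projective measurement associated with $\mathcal{O}^1$, records the outcome, and accepts iff the resulting string of outcomes lies in the regular control language $\mathcal{L}^1$. Since $\mathcal{L}^1$ is recognized by the DFA $\mathcal{A}$ with state set $S$, a 1QCFA can simulate the 1QFACL by using its quantum component to carry out precisely the same unitaries and measurements, while letting its classical component run $\mathcal{A}$ driven by the stream of measurement outcomes.

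Concretely, I would set $Q^2=Q^1$, $|q_0\rangle^2=|q_0\rangle^1$, $\Sigma^2=\Sigma^1$, and for every $s\in S^2$ and every $\sigma\in\Gamma$,
\begin{equation}
\Theta^2(s,\sigma)=\Theta^1_\sigma,\qquad \Delta^2(s,\sigma)=\mathcal{O}^1,
\end{equation}
so the quantum evolution and the projective measurement are independent of the classical state and coincide exactly with those of the 1QFACL. On the classical side, take $S^2=S\cup\{s_r\}$ (giving $m+1$ classical states), $s_0^2=s_0$, $S_{acc}^2=S_{acc}$, and $S_{rej}^2=\{s_r\}$. The classical transition is $\delta^2(s,\sigma)(c_i)=\delta(s,c_i)$ for $s\in S$ and $\sigma\in\Gamma\setminus\{\$\}$; on the right end-marker set $\delta^2(s,\$)(c_i)=\delta(s,c_i)$ when $\delta(s,c_i)\in S_{acc}$ and $\delta^2(s,\$)(c_i)=s_r$ otherwise, with $s_r$ absorbing.

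The correctness argument is a straightforward induction on $k=0,1,\dots,l+1$ showing that, after the 1QCFA has processed $\sigma_0\cdots\sigma_k$, the joint distribution over (classical state, quantum state, history of outcomes) equals the distribution in which a string $y_0\cdots y_k\in\mathcal{C}^{k+1}$ is sampled with weight $\bigl\|\prod_{i=0}^{k}(P(y_i)\Theta^1_{\sigma_i})|q_0\rangle^1\bigr\|^2$, the quantum register carries the corresponding normalized post-measurement state, and the classical register holds the iterated DFA state $\delta^{*}(s_0,y_0\cdots y_k)$. The inductive step is immediate: $\Theta^2$ and $\Delta^2$ agree with $\Theta^1_\sigma$ and $\mathcal{O}^1$, so the quantum branching probabilities and post-measurement states coincide, while $\delta^2$ performs exactly one step of $\mathcal{A}$ on the freshly observed outcome. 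At $k=l+1$, the 1QCFA accepts iff the classical state lies in $S_{acc}$, which by the DFA invariant happens exactly for outcome strings in $\mathcal{L}^1$, giving
\begin{equation}
\Pr[\mathcal{A}^2\text{ accepts }\omega]=\sum_{y\in\mathcal{L}^1}p(y\mid\sigma_0\cdots\sigma_{l+1})=\Pr[\mathcal{A}^1\text{ accepts }\omega].
\end{equation}

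The only point worth flagging, rather than a real obstacle, is the interaction of the two components: the 1QCFA definition permits $\Delta^2(s,\sigma)$ to depend on the classical state $s$, but the construction deliberately fixes it to $\mathcal{O}^1$ at every $s$, so that the classical register acts as a purely passive DFA-style accumulator and never perturbs the quantum dynamics being simulated. The auxiliary state $s_r$ is pure bookkeeping: it serves as an absorbing rejecting state at the end-marker transition and accounts for the ``$+1$'' in the $m+1$ classical states claimed by the theorem.
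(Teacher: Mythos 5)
Your construction coincides with the paper's: identical quantum component ($\Theta^2(s,\sigma)=\Theta^1_\sigma$, $\Delta^2(s,\sigma)$ given by the projectors of $\mathcal{O}^1$), the classical part running the DFA for $\mathcal{L}^1$ on the measurement outcomes, and the same extra absorbing state $s_r$ handling rejection at the end-marker. The inductive correctness argument you add is a sound elaboration of what the paper leaves implicit, so the proposal is correct and follows essentially the same route.
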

\begin{proof}
We use the quantum component of 1QCFA to simulate the evolution of
quantum states of 1QFACL and also its projective measurements. We
use the classical states of 1QCFA to simulate DFA $\mathcal{L}^1$.
Let $Q^{2}=Q^{1}$, $S^{2}=S\cup \{s_r\}$, $\Sigma^{2}=\Sigma^{1}$,
$s_0^2=s_0$, $|q_{0}\rangle^{2}=|q_{0}\rangle^{1}$,
$S_{acc}^{2}=S_{acc}$ and $S_{rej}^{2}=\{s_r\}$. For any current
classical state $s$ and any scanning symbol $\sigma$, the quantum
transition function will be defined to be
\begin{equation}
 \Theta^{2}(s,\sigma)=\Theta^{1}(\sigma).
\end {equation}
The measurement function is defined to be
\begin{equation}
  \Delta^{2}(s,\sigma)=\{P(c_i):i=1, \cdots, t\},
\end {equation}
where $P(c_i)$ denotes the projector onto the eigenspace corresponding to $c_i$. We assume that the set of possible
  classical outcomes is
 $\mathcal{C}=\{c_1, \cdots, c_t\}$, where $\mathcal{C}=\Sigma$, then the classical transition function will be  defined to be

\begin{equation}
  \delta^{2}(s,\sigma)(c)=\left\{%
\begin{array}{ll}
    \delta(s,c), \ \ \ \ \ \ \ \ \ & \sigma\neq \$;\\
    \delta(s,c),                   & \sigma=\$, \delta (s, c)\in S_{acc};\\
    s_r,                           & \sigma=\$, \delta (s, c)\notin S_{acc}.\\

\end{array}%
\right.
\end {equation}
\end{proof}

\section{Closure proprieties of
 languages accepted by
 1QCFA}
For convenience, we denote by 1QCFA($\epsilon$) the classes of
languages recognized by 1QCFA with bounded error $\epsilon$.
Moreover, let $QS(\mathcal{A})$ and $CS(\mathcal{A})$ denote the
numbers of quantum states and classical states of a 1QCFA
$\mathcal{A}$. We start to consider the operation of intersection
.

\begin{theorem}\label{thint}
If $L_1\in 1QCFA(\epsilon_1)$ and $L_2\in 1QCFA(\epsilon_2)$, then
$L_1\cap L_2\in 1QCFA (\epsilon)$, where
$\epsilon=\epsilon_1+\epsilon_2-\epsilon_1\epsilon_2$.
\end{theorem}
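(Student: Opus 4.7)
The plan is to construct a single 1QCFA $\mathcal{A}$ that runs the two given automata $\mathcal{A}_1$ and $\mathcal{A}_2$ in parallel on the same input, combining quantum components by tensor product and classical components by Cartesian product, and to accept only when both simulated machines accept. The key point is that, since we start from a product quantum state and every step is a tensor product of operators and projectors, the two simulations remain statistically independent, so standard independent-events probability bounds give exactly the claimed error.

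More precisely, let $\mathcal{A}_i=(Q_i,S_i,\Sigma,\Theta_i,\Delta_i,\delta_i,|q_0\rangle_i,s_{0,i},S_{acc,i},S_{rej,i})$ for $i=1,2$ witness the hypothesis. I would define $\mathcal{A}$ by taking $Q=Q_1\times Q_2$ (with underlying Hilbert space $\mathcal{H}(Q_1)\otimes\mathcal{H}(Q_2)$), $S=S_1\times S_2$, $|q_0\rangle=|q_0\rangle_1\otimes|q_0\rangle_2$, $s_0=(s_{0,1},s_{0,2})$, $S_{acc}=S_{acc,1}\times S_{acc,2}$, and $S_{rej}=S\setminus S_{acc}$. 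For each pair $(s_1,s_2)\in S$ and each $\sigma\in\Gamma$, set $\Theta((s_1,s_2),\sigma)=\Theta_1(s_1,\sigma)\otimes\Theta_2(s_2,\sigma)$, and let $\Delta((s_1,s_2),\sigma)$ be the projective measurement whose projectors are $P_i\otimes P_j$ ranging over the projectors of $\Delta_1(s_1,\sigma)$ and $\Delta_2(s_2,\sigma)$ (outcomes are the pairs $(c_i,c_j)$ of the two individual outcome sets). Finally, $\delta((s_1,s_2),\sigma)((c_i,c_j))=(\delta_1(s_1,\sigma)(c_i),\delta_2(s_2,\sigma)(c_j))$.

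Next I would verify by induction on the length of the input prefix read so far that the joint configuration of $\mathcal{A}$ factorises: the quantum state is a product $|\phi_1\rangle\otimes|\phi_2\rangle$ and the classical state $(s_1,s_2)$ is exactly the pair of classical states that $\mathcal{A}_1$ and $\mathcal{A}_2$ would have in an independent run with the same measurement outcomes. This factorisation is preserved because the unitaries split as tensor products, the measurement operators split as tensor products (so the outcome distribution is a product distribution), and the classical transition also splits coordinate-wise. From this, for any input $\omega$,
\begin{equation}
\Pr[\mathcal{A}\text{ accepts }\omega]=\Pr[\mathcal{A}_1\text{ accepts }\omega]\cdot \Pr[\mathcal{A}_2\text{ accepts }\omega].
\end{equation}

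With this product formula in hand, the error analysis is immediate. If $\omega\in L_1\cap L_2$, then $\Pr[\mathcal{A}\text{ accepts }\omega]\geq(1-\epsilon_1)(1-\epsilon_2)=1-\epsilon$. If $\omega\notin L_1\cap L_2$, then without loss of generality $\omega\notin L_1$, whence $\Pr[\mathcal{A}_1\text{ accepts }\omega]\leq\epsilon_1$ and $\Pr[\mathcal{A}\text{ accepts }\omega]\leq\epsilon_1\leq\epsilon$, so $\mathcal{A}$ rejects with probability at least $1-\epsilon$. The main (and only real) obstacle is to check carefully that $\Delta((s_1,s_2),\sigma)$ as defined is genuinely a projective measurement on $\mathcal{H}(Q_1)\otimes\mathcal{H}(Q_2)$ and that the resulting joint outcome distribution is the product of the two marginal distributions; this is a standard fact about tensor products of projective measurements, and once it is noted, the rest is bookkeeping.
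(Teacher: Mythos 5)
Your proposal is correct and follows essentially the same route as the paper: tensor product of the quantum components, Cartesian product of the classical components, componentwise transition and measurement functions, and acceptance exactly when both simulated machines accept. The only differences are cosmetic or slight improvements — you take $S_{rej}=S\setminus S_{acc}$ rather than the paper's explicit union of products (equivalent, since each component is assumed to halt in an accepting or rejecting state), and you make explicit the product-state/independence induction that the paper leaves implicit when it asserts the $(1-\epsilon_1)(1-\epsilon_2)$ bounds.
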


\begin{proof}
Let
$\mathcal{A}^i=(Q^i,S^i,\Sigma^i,\Theta^i,\Delta^i,\delta^i,|q_{0}\rangle^i,s_{0}^i,S_{acc}^i,S_{rej}^i)$
be 1QCFA to recognize $L_i$ with bounded error $\epsilon_i$
(i=1,2). We construct a 1QCFA
$\mathcal{A}=(Q,S,\Sigma,\Theta,\linebreak[0]\Delta,\linebreak[0]\delta,\linebreak[0]|q_{0}\rangle,s_{0},\linebreak[0]S_{acc},S_{rej})$
where:
\begin{enumerate}
  \item $Q=Q^1\otimes Q^2$,
  \item $S=S^1\times S^2$,
  \item $\Sigma=\Sigma^1\cap \Sigma^2$,
  \item $s_{0}=\langle s_0^1, s_0^2\rangle $,
  \item $|q_{0}\rangle=|q_{0}\rangle^1\otimes |q_{0}\rangle^2$,
  \item $S_{acc}=S_{acc}^1\times S_{acc}^2$,
  \item $S_{rej}=(S_{acc}^1\times S_{rej}^2)\cup(S_{rej}^1\times S_{acc}^2) \cup (S_{rej}^1\times S_{rej}^2)$

\item For any classical state $s=\langle s^1,s^2\rangle\in S$ and
any $\sigma\in \Sigma$, the quantum transition function of
$\mathcal{A}$ is defined to be
  \begin{equation}
  \Theta(s,\sigma)=\Theta(\langle s^1,s^2\rangle,\sigma)=\Theta^1(s^1,\sigma)\otimes \Theta^2(s^2,\sigma).
  \end{equation}

\item For any classical state $s=\langle s^1,s^2\rangle\in S$ and
any $\sigma\in \Sigma$, the measurement function of $\mathcal{A}$
is defined to be
  \begin{equation}
  \Delta(s,\sigma)=\Delta(\langle s^1,s^2\rangle,\sigma)=\Delta^1(s^1,\sigma)\otimes \Delta^2(s^2,\sigma).
  \end{equation}
As classical measurements outcomes are
 then tuples
 $c_{ij}=\langle c_i,c_j\rangle$.

\item For any classical state $s=\langle s^1,s^2\rangle\in S$ and
any $\sigma\in \Sigma$, the classical transition function of
$\mathcal{A}$ is defined to be
  \begin{equation}
\delta(s,\sigma)(c_{ij})=\delta(\langle
s^1,s^2\rangle,\sigma)(\langle
c_i,c_j\rangle)=\langle\delta^1(s^1,\sigma)(c_i),
\delta^2(s^2,\sigma)(c_j)\rangle.
  \end{equation}
\end{enumerate}
In terms of the 1QCFA $\mathcal{A}$ constructed above, for any $\omega\in \Sigma^{*}$, we have:
\begin{enumerate}
\item If $\omega\in L_1\cap L_2$, then $\mathcal{A}$ will enter a
state $\langle t_1, t_2\rangle\in S_{acc}^1\times S_{acc}^2$ at
the end of the computation with probability at least
$(1-\epsilon_1)(1-\epsilon_2)$. $\mathcal{A}$ accepts $\omega$
with the probability at least
$(1-\epsilon_1)(1-\epsilon_2)=1-(\epsilon_1+\epsilon_2-\epsilon_1\epsilon_2)$.
\item If $\omega\in L_1$ but $\omega\notin L_2$, then
$\mathcal{A}$ will enter a state $\langle t_1, t_2\rangle\in
S_{acc}^1\times S_{rej}^2$ at the end of the computation with
probability at least $(1-\epsilon_1)(1-\epsilon_2)$. $\mathcal{A}$
rejects $\omega$ with the probability at least
$1-(\epsilon_1+\epsilon_2-\epsilon_1\epsilon_2)$. \item The case
$\omega\notin L_1$ but $\omega\in L_2$ is symmetric to the
previous one and therefore the same is the outcome. \item If
$\omega\notin L_1$ and $\omega\notin L_2$, then $\mathcal{A}$ will
enter a state $\langle t_1, t_2\rangle\in S_{rej}^1\times
S_{rej}^2$ at the end of the computation with probability at least
$(1-\epsilon_1)(1-\epsilon_2)$. $\mathcal{A}$ rejects $\omega$
with the probability at least
$1-(\epsilon_1+\epsilon_2-\epsilon_1\epsilon_2)$.
\end{enumerate}
So $L_1\cap L_2\in 1QCFA (\epsilon)$.
\end{proof}

\begin{remark}
According to the construction given above, let
$QS(\mathcal{A}^1)=n_1$,
$CS(\mathcal{A}^1)\linebreak[0]=\linebreak[0]m_1$,
$QS(\mathcal{A}^2)=n_2$ and $CS(\mathcal{A}^2)=m_2$, then
$QS(\mathcal{A})=n_{1}n_2$, $CS(\mathcal{A})=m_{1}m_2$.
\end{remark}

A similar outcome holds for the union operation.
\begin{theorem}
If $L_1\in 1QCFA(\epsilon_1)$ and $L_2\in 1QCFA(\epsilon_2)$, then
$L_1\cup L_2\in 1QCFA (\epsilon)$, where
$\epsilon=\epsilon_1+\epsilon_2-\epsilon_1\epsilon_2$.
\end{theorem}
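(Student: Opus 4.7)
The plan is to mimic the tensor/product construction used in Theorem~\ref{thint} for intersection, changing only the way accepting and rejecting classical states of the product automaton are partitioned. The quantum register, the classical register, the input alphabet, the initial states, the unitaries $\Theta$, the measurements $\Delta$, and the classical transition function $\delta$ will all be defined exactly as in the proof of Theorem~\ref{thint}, so that the product automaton $\mathcal{A}$ runs $\mathcal{A}^{1}$ and $\mathcal{A}^{2}$ in parallel, tracking their joint classical state at every step.

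The only modification is the choice of accepting/rejecting sets. For union, a configuration should be accepting whenever at least one component machine would have accepted, so I would take
\begin{equation}
S_{acc}=(S_{acc}^{1}\times S_{acc}^{2})\cup (S_{acc}^{1}\times S_{rej}^{2})\cup (S_{rej}^{1}\times S_{acc}^{2}),
\end{equation}
and
\begin{equation}
S_{rej}=S_{rej}^{1}\times S_{rej}^{2}.
\end{equation}
Because each $\mathcal{A}^{i}$ halts in $S_{acc}^{i}\cup S_{rej}^{i}$, the sets $S_{acc}$ and $S_{rej}$ partition the reachable halting states of $\mathcal{A}$.

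To verify correctness I would do a four-case analysis on whether $\omega$ lies in $L_{1}$, $L_{2}$, both, or neither, exactly as in Theorem~\ref{thint}. The crucial observation, already used there, is that the joint measurement statistics of the product automaton are the product of the component statistics, so for each of the four cases the probability of ending in the ``intended'' classical state is at least $(1-\epsilon_{1})(1-\epsilon_{2})$. When $\omega\in L_{1}\cup L_{2}$, the intended pairs all lie in $S_{acc}$ by the definition above, giving acceptance probability at least $1-(\epsilon_{1}+\epsilon_{2}-\epsilon_{1}\epsilon_{2})$; when $\omega\notin L_{1}\cup L_{2}$, the intended pair lies in $S_{rej}^{1}\times S_{rej}^{2}\subseteq S_{rej}$, giving the same rejection bound.

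There is no real obstacle here; the only point requiring care is making sure that $S_{acc}$ and $S_{rej}$ really cover all halting configurations so that $\delta$ is well defined in the sense required at the end of Section~2.2, and that the probability bound $(1-\epsilon_{1})(1-\epsilon_{2})$ transfers cleanly to the product measurement, which follows because the tensor-product measurement $\Delta^{1}(s^{1},\sigma)\otimes\Delta^{2}(s^{2},\sigma)$ yields marginal outcome distributions identical to those of the two component automata on the respective factors of $|q_{0}\rangle^{1}\otimes|q_{0}\rangle^{2}$. The state-count remark $QS(\mathcal{A})=n_{1}n_{2}$, $CS(\mathcal{A})=m_{1}m_{2}$ is inherited verbatim from the intersection construction.
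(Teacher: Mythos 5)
Your proposal is correct and follows essentially the same route as the paper: reuse the product construction from Theorem~\ref{thint} verbatim and only redefine $S_{acc}=(S_{acc}^1\times S_{rej}^2)\cup(S_{rej}^1\times S_{acc}^2)\cup(S_{acc}^1\times S_{acc}^2)$ and $S_{rej}=S_{rej}^1\times S_{rej}^2$, with the same four-case probability analysis giving the bound $1-(\epsilon_1+\epsilon_2-\epsilon_1\epsilon_2)$.
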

\begin{proof}
Let
$\mathcal{A}^i=(Q^i,S^i,\Sigma^i,\Theta^i,\Delta^i,\delta^i,|q_{0}\rangle^i,s_{0}^i,S_{acc}^i,S_{rej}^i)$
be 1QCFA to recognize $L_i$ with bounded error $\epsilon_i$
(i=1,2). The construction of the 1QCFA
$\mathcal{A}=(Q,S,\Sigma,\Theta,\linebreak[0]\Delta,\linebreak[0]\delta,\linebreak[0]|q_{0}\rangle,s_{0},\linebreak[0]S_{acc},S_{rej})$
is the same as in the proof of Theorem \ref{thint} except for
$S_{acc}$ and $S_{rej}$. We define $S_{acc}=(S_{acc}^1\times
S_{rej}^2)\cup(S_{rej}^1\times S_{acc}^2) \cup (S_{acc}^1\times
S_{acc}^2) $ and $S_{rej}=S_{rej}^1\times S_{rej}^2$. The rest of
the proof is similar to the proof in Theorem \ref{thint}.
\end{proof}

\begin{remark}
In the last proof the set of input symbols was defined as
 $\Sigma=\Sigma^1\cap \Sigma^2$. Actually, if we take
$\Sigma=\Sigma^1\cup \Sigma^2$, the theorem still holds. In that
case, we extend $\Sigma^i$ to $\Sigma$ by adding a rejecting
classical state $s^i_r$ to $\mathcal{A}^i$. For any classical
state $s^i\in S^i$ and $\sigma^i\notin \Sigma^i$, the quantum
transition function is defined to be $\Theta^i(s^i,\sigma^i)=I$,
the measurement function is defined to be
$\Delta^i(s^i,\sigma^i)=I$. We assume the measurement result to be
$c$, then the classical transition function will be defined to be
$\delta^i(s^i,\sigma^i)(c)=s^i_r$. For the new adding state
$s^i_r$, we define the transition functions as follow: for any
$\sigma\in\Sigma$, $\Theta^i(s^i_r,\sigma)=I$,
$\Delta^i(s^i_r,\sigma)=I$, $\delta^i(s^i_r,\sigma)(c)=s^i_r$,
where $c$ is the the measurement result.
\end{remark}

\begin{theorem}
If $L\in 1QCFA(\epsilon)$, then also $L^{c}\in 1QCFA(\epsilon)$, where $L^{c}$ is the complement of $L$.
\end{theorem}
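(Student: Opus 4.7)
The plan is to obtain $\mathcal{A}^c$ from the given 1QCFA $\mathcal{A}$ recognizing $L$ by simply swapping the roles of accepting and rejecting classical states, leaving all quantum data unchanged. Concretely, if $\mathcal{A}=(Q,S,\Sigma,\Theta,\Delta,\delta,|q_{0}\rangle,s_{0},S_{acc},S_{rej})$, I would define
\[
\mathcal{A}^c=(Q,S,\Sigma,\Theta,\Delta,\delta,|q_{0}\rangle,s_{0},S_{rej},S_{acc}),
\]
so that $\mathcal{A}^c$ and $\mathcal{A}$ execute pointwise identical computations on every input, producing the same probability distribution over final classical states; only the labelling of which final states count as ``accept'' has been exchanged.

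Next I would verify the bounded-error condition. Recall that the definition of 1QCFA requires $\delta$ to be well-defined so that $\mathcal{A}$ always halts in an accepting or rejecting classical state; in particular $\Pr[\mathcal{A}\ \text{accepts}\ \omega]+\Pr[\mathcal{A}\ \text{rejects}\ \omega]=1$ for every $\omega$. Hence for $\omega\in L^c$ (i.e.\ $\omega\notin L$) we have $\Pr[\mathcal{A}\ \text{rejects}\ \omega]\geq 1-\epsilon$, which by construction equals $\Pr[\mathcal{A}^c\ \text{accepts}\ \omega]$; and for $\omega\notin L^c$ (i.e.\ $\omega\in L$) we have $\Pr[\mathcal{A}\ \text{accepts}\ \omega]\geq 1-\epsilon$, which equals $\Pr[\mathcal{A}^c\ \text{rejects}\ \omega]$. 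Thus $\mathcal{A}^c$ recognizes $L^c$ with bounded error $\epsilon$.

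There is essentially no obstacle here: the argument only uses that the model is a halting model with a clean accept/reject dichotomy on the classical side, so the standard ``swap accepting and rejecting states'' trick goes through verbatim. The only point worth flagging explicitly is that no quantum data (neither $\Theta$, $\Delta$, nor $\delta$) needs to be modified, since the classical accept/reject decision is read off only from the final classical state and the quantum register plays no further role once the input has been consumed. Note also that the state complexity is preserved: $QS(\mathcal{A}^c)=QS(\mathcal{A})$ and $CS(\mathcal{A}^c)=CS(\mathcal{A})$.
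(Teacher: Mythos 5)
Your proposal is correct and follows essentially the same route as the paper: construct $\mathcal{A}^{c}$ by exchanging $S_{acc}$ and $S_{rej}$ while keeping all other components unchanged, then use the fact that every run halts in an accepting or rejecting classical state to transfer the error bound. Your explicit remark that $\Pr[\mathcal{A}\ \text{accepts}\ \omega]+\Pr[\mathcal{A}\ \text{rejects}\ \omega]=1$ makes the argument slightly more careful than the paper's, but the construction and conclusion are identical.
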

\begin{proof}
Let a 1QCFA($\epsilon$)
$\mathcal{A}=(Q,S,\Sigma,\Theta,\Delta,\delta,|q_{0}\rangle,s_{0},S_{acc},S_{rej})$
accept $L$ with a bounded error $\epsilon$. We can construct the
1QCFA $\mathcal{A}^{c}$ only by exchanging the classical accepting
and rejecting states in $\mathcal{A}$. That is,
$\mathcal{A}^{c}=(Q,S,\Sigma,\Theta,\Delta,\delta,|q_{0}\rangle,s_{0},\linebreak[0]
S_{acc}^{c},\linebreak[0] S_{rej}^{c})$, where
$S_{acc}^{c}=S_{rej}$, $S_{rej}^{c}=S_{acc}$ and the other
components remain the same as those  defined in $\mathcal{A}$.
Afterwards we have:
\begin{enumerate}
\item If $\omega\in L^{c}$,  then $\omega\notin L$. Indeed, for an
input $\omega$, $\mathcal{A}$ will enter a rejecting state with
probability at least $1-\epsilon$ at the end of the computation.
With the same input $\omega$, $\mathcal{A}^{c}$ will enter an
accepting state with probability at least  $1-\epsilon$ at the end
of the computation. Hence, $\mathcal{A}^{c}$ accepts $\omega$ with
the probability at least $1-\epsilon$; \item The case
$\omega\notin L^{c}$ is treated in a symmetric way..
\end{enumerate}
\end{proof}
\begin{remark}
According to the construction given above, if $QS(\mathcal{A})=n$,
$CS(\mathcal{A})=m$, then
$QS(\mathcal{A}^c)\linebreak[0]=\linebreak[0]n$,
$CS(\mathcal{A}^c)=m$.
\end{remark}

\section{Succinctness results}

State complexity and succinctness results are an important
research area of classical automata theory, see \cite{Yu}, with a
variety of
 applications. Once quantum versions of classical automata were introduced  and explored, it started to be of large interest to find out through
 succinctness results a relation between the power of classical and
 quantum automata model. This has turned out to be an area of surprising
 outcomes that again indicated that relations between
 classical and corresponding quantum automata models is intriguing. For
 example, it has been shown, see \cite{Amb-F,AN,ANT,Le}, that for some languages 1QFA
 require exponentially less states that classical 1FA, but for some other
 languages it can be in an opposite way.

Since 1QCFA can simulate both 1FA and 1QFA, and in this way they combine
the advantages of both of these models, it is of interest to explore the
relation between the state complexity of languages for the case that they
are accepted by 1QCFA and MM-1QFA and this we will do in this section.

The main result we obtain when considering languages
 $L_{m}=\{ a^*b^*\mid\  |a^*b^*|=km, k=1,2,\cdots\}$, where $m$ is a prime.
Obviously, there exist a $2m+2$ states DFA, depicted in Figure \ref{f1} that accepts $L_{m}$.
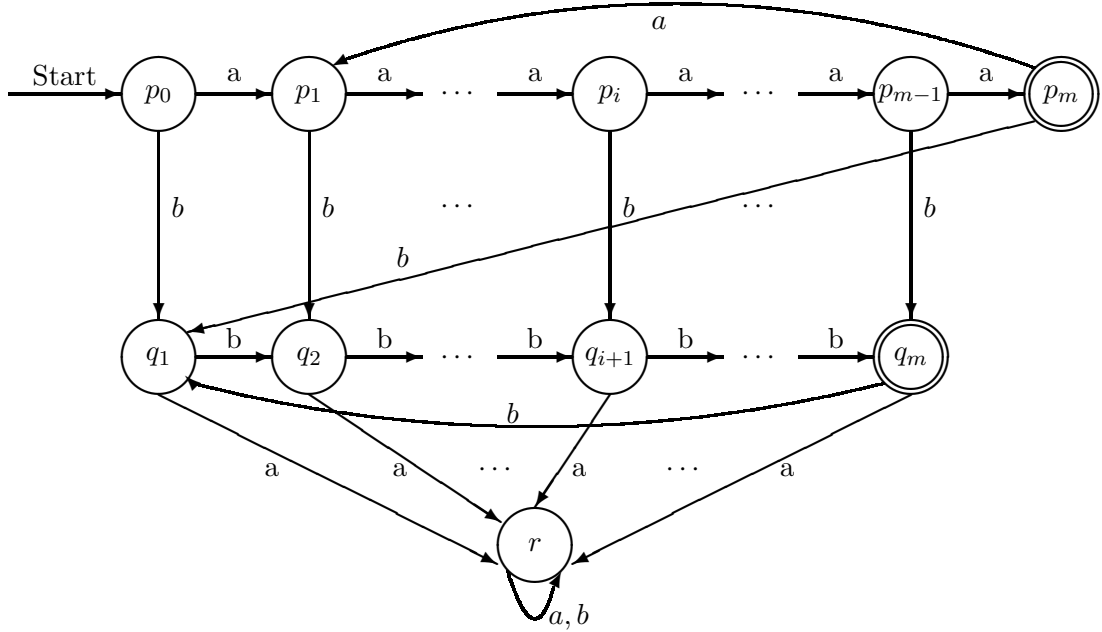
\begin{figure}
  \centering
  \setlength{\unitlength}{1cm}
\begin{picture}(16,8)\thicklines
  \put(2,7){\circle{1}\makebox(0,0){$p_{0}$}}
\put(4,7){\circle{1}\makebox(0,0){$p_{1}$}}
\put(6,7){\makebox(0,0){$\cdots$}}
\put(8,7){\circle{1}\makebox(0,0){$p_{i}$}}
\put(10,7){\makebox(0,0){$\cdots$}}
\put(12,7){\circle{1}\makebox(0,0){$p_{m-1}$}}
\put(14,7){\circle{1}\circle{0.9}\makebox(0,0){$p_{m}$}}

\put(2,3.5){\circle{1}\makebox(0,0){$q_{1}$}}
\put(4,3.5){\circle{1}\makebox(0,0){$q_{2}$}}
\put(6,3.5){\makebox(0,0){$\cdots$}}
\put(8,3.5){\circle{1}\makebox(0,0){$q_{i+1}$}}
\put(10,3.5){\makebox(0,0){$\cdots$}}
\put(12,3.5){\circle{1}\circle{0.9}\makebox(0,0){$q_{m}$}}

\put(7,1){\circle{1}\makebox(0,0){$r$}}

\put(0,7){\vector(1,0){1.5}\makebox(-1.5,0.5){Start}}
\put(2.5,7){\vector(1,0){1}\makebox(-1,0.5){a}}
\put(4.5,7){\vector(1,0){1}\makebox(-1,0.5){a}}
\put(6.5,7){\vector(1,0){1}\makebox(-1,0.5){a}}
\put(8.5,7){\vector(1,0){1}\makebox(-1,0.5){a}}
\put(10.5,7){\vector(1,0){1}\makebox(-1,0.5){a}}
\put(12.5,7){\vector(1,0){1}\makebox(-1,0.5){a}}

\put(2.5,3.5){\vector(1,0){1}\makebox(-1,0.5){b}}
\put(4.5,3.5){\vector(1,0){1}\makebox(-1,0.5){b}}
\put(6.5,3.5){\vector(1,0){1}\makebox(-1,0.5){b}}
\put(8.5,3.5){\vector(1,0){1}\makebox(-1,0.5){b}}
\put(10.5,3.5){\vector(1,0){1}\makebox(-1,0.5){b}}

\put(2,6.5){\vector(0,-1){2.5}\makebox(0.5,-2){$b$}}
\put(4,6.5){\vector(0,-1){2.5}\makebox(0.5,-2){$b$}}
\put(6,6.5){\makebox(0,-2){$\cdots$}}
\put(8,6.5){\vector(0,-1){2.5}\makebox(0.5,-2){$b$}}
\put(10,6.5){\makebox(0,-2){$\cdots$}}
\put(12,6.5){\vector(0,-1){2.5}\makebox(0.5,-2){$b$}}

\put(13.64, 6.64){\vector(-4,-1){11.25}\makebox(5,-3.6){$b$}}

\qbezier(13.64,7.35) (9,9) (4.45,7.45)
\put(4.45,7.45){\vector(-2,-1){0.15}\makebox(8,1){$a$}}

\qbezier(11.64,3.15) (7,2) (2.45,3.15)
\put(2.45,3.15){\vector(-1,1){0.1}\makebox(8,-0.8){$b$}}

\put(2,3){\vector(2,-1){4.5}\makebox(-6,-2){a}}
\put(4,3){\vector(3,-2){2.55}\makebox(-2.7,-2){a}}
\put(6,3){\makebox(1,-2){$\cdots$}}
\put(8,3){\vector(-2,-3){1}\makebox(0.7,-2){a}}
\put(10,3){\makebox(-2,-2){$\cdots$}}
\put(12,3){\vector(-2,-1){4.5}\makebox(5,-2){a}}

\qbezier(6.64,0.64) (7,-0.5) (7.25,0.44)
\put(7.25,0.44){\vector(1,2){0.1}\makebox(0.2,-0.8){$a,b$}}
\end{picture}

  \centering\caption{DFA $\mathcal{A}$ recognizing $L_{m}$}\label{f1}
\end{figure}

\begin{lemma}
DFA $\mathcal{A}$ depicted in Figure \ref{f1} is minimal.
\end{lemma}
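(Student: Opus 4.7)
The plan is to verify the Myhill--Nerode criterion: the DFA $\mathcal{A}$ is minimal iff every pair of distinct states is separated by some string $w$, meaning exactly one of the two states reaches $S_{acc}$ after reading $w$. I would split the $\binom{2m+2}{2}$ pairs into four types and exhibit a distinguisher in each case. First, to separate the dead state $r$ from any other state: from $r$ no input is accepted, whereas $p_m$ and $q_m$ accept on $\epsilon$, while $p_i$ (for $0\le i\le m-1$) reaches $p_m$ on $a^{m-i}$, and $q_j$ (for $1\le j\le m-1$) reaches $q_m$ on $b^{m-j}$.

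For two distinct $p$-states $p_i,p_j$ with $0\le i<j\le m$, I would use $\epsilon$ when $j=m$, and $a^{m-j}$ otherwise: the latter drives $p_j$ to the accepting state $p_m$ while sending $p_i$ to $p_{i+m-j}$, whose index lies in $\{1,\dots,m-1\}$ and is hence non-accepting. The case of two $q$-states $q_i,q_j$ with $1\le i<j\le m$ is symmetric, using $b^{m-j}$ (or $\epsilon$ when $j=m$).

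The remaining case, a mixed pair $p_i$ against $q_j$, is the most delicate. When $i<m$, the string $a^{m-i}$ sends $p_i$ to $p_m$ (accept) whereas its very first letter ejects $q_j$ into $r$ (reject forever). When $i=m$ and $j<m$, the empty string already separates them. The only pair still open is $p_m$ vs.\ $q_m$, both of which are accepting; here $a^m$ does the job, since the $p$-cycle wraps $p_m$ back to $p_m$ (accept), while $q_m$ is killed by the first $a$ and stays at $r$.

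The proof is therefore a routine case analysis drawn directly from the arithmetic visible in Figure~\ref{f1}; the only subtlety worth flagging is the asymmetry between the two accepting states $p_m$ and $q_m$, which forces one to exploit the $a$-wraparound on the $p$-cycle to distinguish them, as opposed to the $\epsilon$-distinction that resolves every other pair involving an accepting state.
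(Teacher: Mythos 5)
Your proof is correct and follows essentially the same route as the paper's: a Myhill--Nerode pairwise-distinguishability case analysis using powers of $a$ and $b$. In fact your treatment is the more careful one --- the paper's uniform formulas break down on exactly the edge cases you isolate, namely $(p_0,p_m)$ (the wraparound $p_m\xrightarrow{a}p_1$ sends \emph{both} states to $p_m$ under $a^{m}$, so one must fall back on $\epsilon$ as you do) and $(p_m,q_m)$ (two accepting states that $\epsilon$ cannot separate, for which your string $a^m$ is needed).
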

\begin{proof}
We show that any two different state $s$ and $t$ are
distinguishable (i.e., there exists a string $z$ such that exactly
one of the following states $\widehat{\delta}(p,z)$\footnote{For
any string $x\in \Sigma^*$ and any $\sigma\in \Sigma$,
$\widehat{\delta}(s,\sigma
x)=\widehat{\delta}(\delta(s,\sigma),x)$; if $|x|=0$,
$\widehat{\delta}(s,x)=s$ \cite{Hop}.} or $\widehat{\delta}(q, z)$
is an accepting state \cite{Yu}).
\begin{enumerate}
\item For $0\leq i\leq m$, $0\leq j\leq m$ and $i\neq j$, we have
$\widehat{\delta}(p_i, a^{m-i})=p_m$ and $\widehat{\delta}(p_j,
a^{m-i})=p_k$, where $k\neq m$. Hence, $p_i$ and $p_j$ are
distinguishable.

\item For $1\leq i\leq m$, $1\leq j\leq m$ and $i\neq j$, we have
$\widehat{\delta}(q_i, b^{m-i})=q_m$ and $\widehat{\delta}(p_j,
b^{m-i})=q_k$, where $k\neq m$. Hence, $q_i$ and $q_j$ are
distinguishable.

\item For $0\leq i\leq m$ and $1\leq j\leq m$, we have
$\widehat{\delta}(p_i, a^{m-i})=p_m$ and $\widehat{\delta}(q_j,
a^{m-i})=r$. Hence, $p_i$ and $q_j$ are distinguishable.

\item Obviously, the state $r$ is distinguishable from any other
state $s$.
\end{enumerate}
Therefore, the Lemma has been proved.
\end{proof}

\begin{lemma}[\cite{Amb-F}]\label{llp}
Any 1PFA recognizing $L_{m}$ with probability $1/2+\epsilon$, for a fixed $\epsilon>0$, has at least $m$ states.
\end{lemma}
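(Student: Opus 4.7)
The plan is to argue by contradiction: suppose a 1PFA $\mathcal{A}$ with $n<m$ classical states recognizes $L_m$ with error probability $\tfrac{1}{2}-\epsilon$. I would specialise to inputs of the form $a^i$, which lie in $L_m$ exactly when $i$ is a positive multiple of $m$. Absorbing the two end-marker transitions into an initial state-distribution vector $v_0\in\mathbb{R}^n$ and an accepting row-vector $\alpha\in\mathbb{R}^n$, the acceptance probability becomes
\[
f(i):=\Pr[\mathcal{A}\text{ accepts }a^i]=\alpha\,M_a^i\,v_0,
\]
where $M_a$ is the $n\times n$ column-stochastic transition matrix for the symbol $a$, whose entries lie in $\{0,\tfrac{1}{2},1\}\subset\mathbb{Q}$. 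The hypothesis forces $f(i)\geq\tfrac{1}{2}+\epsilon$ whenever $m\mid i$ and $i>0$, and $f(i)\leq\tfrac{1}{2}-\epsilon$ otherwise.

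The heart of the argument is spectral. Using the Jordan decomposition of $M_a$ one obtains $f(i)=\sum_\lambda p_\lambda(i)\,\lambda^i$, where $\lambda$ ranges over the spectrum of $M_a$ and $p_\lambda$ is a polynomial whose degree is strictly less than the largest Jordan block at $\lambda$. Boundedness of $f\in[0,1]$ rules out $|\lambda|>1$ and forces every unit-modulus eigenvalue into a $1\times 1$ Jordan block; moreover, the peripheral spectrum of a stochastic matrix consists of roots of unity. Hence asymptotically
\[
f(i)=g(i)+o(1),\qquad g(i):=\sum_{|\lambda|=1}c_\lambda\,\lambda^i,
\]
and $g$ is exactly periodic with period dividing $N:=\mathrm{lcm}\{\mathrm{ord}(\lambda):|\lambda|=1\}$. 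Since the separation $f(i)\gtrless\tfrac{1}{2}\pm\epsilon$ persists for arbitrarily large $i$, the same separation is inherited by $g$, and the pattern it describes is genuinely $m$-periodic. Combining the $N$-periodicity of $g$ with this $m$-periodic pattern forces $m\mid N$; since $m$ is prime, it follows that some eigenvalue $\lambda^*$ of $M_a$ on the unit circle has order $d$ divisible by $m$.

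A Galois argument now closes the proof. Because $M_a\in\mathbb{Q}^{n\times n}$, its characteristic polynomial lies in $\mathbb{Q}[x]$, so the spectrum of $M_a$ is stable under $\mathrm{Gal}(\overline{\mathbb{Q}}/\mathbb{Q})$; all $\phi(d)$ primitive $d$-th roots of unity must therefore lie in $\mathrm{spec}(M_a)$. Since $m$ is prime and $m\mid d$, one has $\phi(d)\geq\phi(m)=m-1$. Together with the Perron eigenvalue $1$ (which has order $1<d$ and is thus distinct from all of the above), this exhibits at least $m$ distinct eigenvalues of the $n\times n$ matrix $M_a$, forcing $n\geq m$ and contradicting our assumption. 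The main obstacle will be the middle step, namely rigorously ruling out the possibility that transient eigenvalues with $|\lambda|<1$ are responsible for the period-$m$ gap on the finitely many inputs we actually consult; this I would handle by observing that the gap must persist for arbitrarily large $i$, where the transient contributions $p_\lambda(i)\lambda^i$ vanish uniformly, so only the unit-modulus part of the spectrum can carry the separation.
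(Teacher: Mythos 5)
Your argument is correct, but note that the paper does not actually prove this lemma: it is imported from Ambainis--Freivalds, with a one-line remark that the proof of their lower bound for $L_p=\{a^i \mid p \text{ divides } i\}$ carries over. What you have written is therefore a genuinely self-contained alternative. The route the cited paper takes is combinatorial Markov-chain theory: decompose the state set of the chain induced by $a$ into transient states and recurrent classes, observe that the distribution is asymptotically periodic with period equal to the lcm of the periods of the recurrent classes, that each class of period $h$ contains at least $h$ states, and that if every class had fewer than $m$ states then (since $m$ is prime) no period, hence no lcm, could be divisible by $m$ --- contradicting the $m$-periodic acceptance pattern. Your spectral version reaches the same fork ($m$ must divide the order of some peripheral eigenvalue) and then closes it with Galois stability of $\mathrm{spec}(M_a)$ and the count $\varphi(d)+1\geq m$; this is elegant and buys a statement that works for any rational-entried stochastic matrix, at the cost of invoking cyclotomic polynomials where a ``period $\leq$ class size'' count suffices. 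Two small points to tighten: (i) boundedness of $f$ alone does not immediately control eigenvalues that $\alpha$ and $v_0$ do not see --- it is cleaner to derive $\rho(M_a)\leq 1$, the semisimplicity of unit-modulus eigenvalues, and the fact that they are roots of unity directly from stochasticity of $M_a$ (via Perron--Frobenius on the recurrent blocks of its Frobenius normal form), which is exactly what you need for $f=g+o(1)$ with $g$ periodic; (ii) in the step forcing $m\mid N$, it is worth writing out the witness pair explicitly (e.g.\ $i=kNm$ versus $i=kNm+N$ for large $k$), since that is where the hypothesis that the gap persists on \emph{positive} multiples of $m$ and the $N$-periodicity of $g$ collide. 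Neither is a real gap.
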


\begin{remark}
The proof can be obtained by an easy modification of the proof from the
paper \cite{Amb-F} where the state complexity of the language $L_p=\{a^i\,|\,
 i\ \mbox{is divisible by}\ p\}$ is considered.
\end{remark}

\begin{lemma}[\cite{Amb-F}]\label{l3}(Forbidden construction)
Let $L$ be a regular language, and let $\mathcal{A}$ be its
minimal DFA. Assume that there is a word $w$ such that
$\mathcal{A}$ contains states $s, t$ (a forbidden construction)
satisfying:
\begin{enumerate}
  \item $s\neq t$,
  \item $\widehat{\delta}(s,x)=t$,
  \item $\widehat{\delta}(t,x)=t$ and
  \item $t$ is neither ``all-accepting" state, nor ``all-rejecting" state.
\end{enumerate}
Then $L$ cannot be recognized by an MM-1QFA with bounded error $\frac{7}{9}+\epsilon$ for any fixed $\epsilon>0$.
\end{lemma}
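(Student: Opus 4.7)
The plan is to argue by contradiction, following the spectral decomposition technique of Ambainis and Freivalds. Suppose an MM-1QFA $M=(Q,\Sigma,\Theta,|q_0\rangle,Q_{acc},Q_{rej})$ recognizes $L$ with bounded error $\frac{7}{9}+\epsilon$ for some fixed $\epsilon>0$. By the hypotheses of the lemma, there exist a word $u$ driving the minimal DFA from $s_0$ to $s$, the word $x$ satisfying $\widehat{\delta}(s,x)=t$ and $\widehat{\delta}(t,x)=t$, and words $v_a, v_r$ witnessing that $t$ is neither all-accepting nor all-rejecting; hence $ux^k v_a \in L$ and $ux^k v_r \notin L$ for every $k\geq 1$.

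First I would set up the spectral machinery for the "loop operator" associated with $x$. Let $E_{non}$ be the non-halting subspace spanned by $Q_{non}$, and let $U_x$ denote the operator on $E_{non}$ obtained by composing the letter-by-letter unitaries of $x$ interleaved with the projection $P_n$ onto $E_{non}$. A standard argument shows that $E_{non}$ decomposes as $E_1 \oplus E_2$, where $U_x$ acts unitarily on $E_1$ (all eigenvalues of modulus $1$) and $\|U_x^k\,\psi\|\to 0$ for every $\psi\in E_2$. Consequently, the $E_2$-component of any state entering the loop is eventually absorbed by accept/reject measurements, while the $E_1$-component only accumulates complex phases.

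Next I would trace the computation of $M$ on $ux^k v_a$ and $ux^k v_r$ for large $k$. Letting $a_1, r_1$ denote the accept/reject probabilities collected while reading $u$, and $\psi\in E_{non}$ the residual non-halting state after $u$, I would split $\psi = \psi_1 + \psi_2$ with $\psi_i \in E_i$. Passing to a subsequence of $k$'s along which $U_x^k$ approaches the identity on $E_1$ (an almost-periodicity/pigeonhole argument on the compact group generated by $U_x\!\restriction\!E_1$), the $\psi_2$ contribution vanishes and the $\psi_1$ contribution reduces, up to arbitrarily small error, to the same operator applied in both cases: the accept/reject probabilities produced by the loop are the same function of $\psi_1$ for $v_a$ and $v_r$, except for the different post-loop words. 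Thus the only "free" quantities distinguishing the two runs are the contributions of the suffixes $v_a$ and $v_r$ applied to the (common) surviving $E_1$-state.

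The main obstacle, and the source of the specific constant $\frac{7}{9}$, is turning this near-indistinguishability into a sharp lower bound on the error. I would introduce variables for the accept/reject probabilities accumulated in the prefix, during the loop, and in the suffix, and formulate the following optimization: minimize the worst-case error over all MM-1QFA behaviors subject to (i) $\Pr[M\text{ accepts }ux^k v_a] \geq 1-\epsilon'$ since $ux^k v_a\in L$, (ii) $\Pr[M\text{ rejects }ux^k v_r] \geq 1-\epsilon'$ since $ux^k v_r\notin L$, and (iii) the common-contribution constraint forced by the $E_1$-invariance (equal "invariant" accept and reject masses on the two strings, because the loop must treat the persistent subspace identically regardless of the suffix). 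Solving this small linear program, exactly as in \cite{Amb-F}, gives the optimum $\frac{7}{9}$, contradicting the hypothesis that the bounded error is strictly below $\frac{7}{9}$. The rest of the argument is routine bookkeeping: bounding the tail contributions from $E_2$ and from the off-identity part of $U_x^k\!\restriction\!E_1$ by quantities smaller than $\epsilon$ so that they do not spoil the LP bound.
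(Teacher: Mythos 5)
The paper does not prove this lemma at all: it is imported verbatim from Ambainis and Freivalds \cite{Amb-F}, so there is no in-paper argument to compare yours against. Judged on its own, your sketch reproduces the correct skeleton of the Ambainis--Freivalds argument (split the non-halting space into a part $E_1$ on which the $x$-loop operator acts unitarily and a part $E_2$ whose norm decays under iteration, then pass to a subsequence of $k$ along which $U_x^k$ is close to the identity on $E_1$), but the core of the argument has a genuine gap.

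The gap is that your constraint system never uses hypothesis 1, $s\neq t$. Every word you feed into the optimization, $ux^kv_a$ and $ux^kv_r$, passes through the state $t$, and nothing in the ``$E_1$-invariance'' prevents the suffix operators for $v_a$ and $v_r$ from separating the surviving state $\psi_1$ perfectly: if the loop happens to act unitarily on the reachable state (so $\psi_2=0$ and no probability is absorbed during the loop), all of your constraints (i)--(iii) are satisfiable with error $0$, and group languages of exactly this shape are in fact recognized by MM-1QFA with probability $1$. The contradiction in \cite{Amb-F} comes from additionally comparing $uz$ with $ux^kz$ for a word $z$ that distinguishes $s$ from $t$ in the minimal DFA (such $z$ exists precisely because $s\neq t$): these two words differ in membership in $L$, yet the corresponding machine configurations differ only by $\psi_2$ and by the probability $\approx\|\psi_2\|^2$ absorbed during the loop, which forces $\|\psi_2\|$ to be bounded away from $0$. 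That lower bound then collides with your constraints (i) and (ii), which force both the absorbed accepting mass and the absorbed rejecting mass --- whose sum is $\approx\|\psi_2\|^2$ --- to be small; the trade-off between these two requirements is what yields the constant $\frac{7}{9}$. A secondary weakness is that, even with the correct constraint set, you do not actually write down or solve the optimization: deferring it to ``exactly as in \cite{Amb-F}'' is circular in a blind proof, since that optimization is the entire quantitative content of the bound.
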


\begin{theorem}
For any fixed $\epsilon>0$, $L_{m}$ cannot be recognized by an MM-1QFA with bounded error $\frac{7}{9}+\epsilon$.
\end{theorem}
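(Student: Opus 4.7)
The plan is to invoke Lemma \ref{l3} (the forbidden construction lemma). Since the DFA $\mathcal{A}$ in Figure \ref{f1} was just shown to be the minimal DFA for $L_m$, it suffices to exhibit a word $x$ and two states $s\neq t$ of $\mathcal{A}$ satisfying the four conditions of Lemma \ref{l3}.

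The natural candidate is $x=a^m$, with $s=p_0$ and $t=p_m$. First, I would note that $s\neq t$ since $m\geq 2$ (because $m$ is prime). Second, starting from $p_0$, reading $a^m$ traces the top row $p_0\to p_1\to\cdots\to p_m$, so $\widehat{\delta}(p_0,a^m)=p_m$. Third, from $p_m$, reading a further $a$ goes back to $p_1$ (the cycle transition drawn as the upper curve in Figure \ref{f1}), and then $a^{m-1}$ takes us through $p_2,\ldots,p_m$; hence $\widehat{\delta}(p_m,a^m)=p_m$ as well.

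The last thing to check is that $p_m$ is neither an all-accepting nor an all-rejecting state. Since $p_m$ is itself an accepting state (the empty continuation is accepted), it is not all-rejecting. On the other hand, $\widehat{\delta}(p_m,a)=p_1\notin S_{acc}$, so there exists a continuation that leads to rejection, and therefore $p_m$ is not all-accepting. All four hypotheses of Lemma \ref{l3} are met with $(s,t,x)=(p_0,p_m,a^m)$, and the claimed non-recognizability by any MM-1QFA with bounded error $\frac{7}{9}+\epsilon$ follows.

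The main ``obstacle'' is really just the bookkeeping of verifying the cycle $p_m\xrightarrow{a}p_1\to\cdots\to p_m$ from the figure; no genuine difficulty arises because the forbidden construction is essentially handed to us by the cyclic structure of the $a$-counter in the minimal DFA. One minor point to be careful about is that we must use the \emph{minimal} DFA (guaranteed by the preceding lemma) since Lemma \ref{l3} is stated for minimal DFAs; this justification should be stated explicitly at the start of the proof.
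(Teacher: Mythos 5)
Your proposal is correct and follows essentially the same route as the paper: apply the forbidden-construction lemma to the minimal DFA of Figure \ref{f1} with $s=p_0$, $t=p_m$, $x=a^m$. The only cosmetic difference is in witnessing that $p_m$ is neither all-accepting nor all-rejecting (you use the continuations $\varepsilon$ and $a$, the paper uses $b^m$ and $ba$), which is immaterial.
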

\begin{proof}
According to Lemma \ref{l3}, we know that $L_m$ cannot be accepted
by any MM-1QFA with bounded error $\frac{7}{9}+\epsilon$ since its
minimal DFA (see Figure \ref{f1}) contains the ``Forbidden
construction" of Lemma \ref{l3}. For example, we can take $s =
p_0$, $t = p_m$, $x = a^m$, then we have $\widehat{\delta}(p_0,
a^m)=p_m$, $\widehat{\delta}(p_m, a^m)=p_m$,
$\widehat{\delta}(p_m, b^m)=q_m$ and $\widehat{\delta}(p_m,
ba)=r$.
\end{proof}

Let $L_1=\{a^*b^* \}$ and $L_{2}=\{w \mid w\in\{a,b\}^*, |w|=km, k=1,2,\cdots \}$ where $m$ is a prime.
So we have $L_m=L_1\cap L_2$. We will show $L_1$ and $L_{2}$ can be recognized by 1QCFA.

\begin{lemma}\label{l4}
The language $L_1$ can be recognized by a 1QCFA $\mathcal{A}^1$ with certainty with 1 quantum state and 4 classical states.
\end{lemma}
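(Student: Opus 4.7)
The plan is to apply Theorem \ref{th1}, which states that any $n$-state DFA can be simulated by a 1QCFA with $1$ quantum state and $n+1$ classical states. Since $L_1 = a^*b^*$ is regular and admits a DFA with only $3$ states, this immediately yields a 1QCFA with $1$ quantum state and $4$ classical states. Because the DFA recognizes $L_1$ deterministically, the resulting 1QCFA will accept $L_1$ with certainty.

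First, I would exhibit a $3$-state DFA $\mathcal{B}=(\{p_0,p_1,p_r\},\{a,b\},\delta,p_0,\{p_0,p_1\})$ with transitions $\delta(p_0,a)=p_0$, $\delta(p_0,b)=p_1$, $\delta(p_1,b)=p_1$, $\delta(p_1,a)=p_r$, and $\delta(p_r,a)=\delta(p_r,b)=p_r$. Here $p_0$ tracks the initial phase in which only $a$'s (possibly none) have been read, $p_1$ tracks the phase after the first $b$, and $p_r$ is the dead state entered as soon as an $a$ is scanned after a $b$. Clearly, $\mathcal{B}$ accepts exactly the strings of the form $a^i b^j$ with $i,j\geq 0$.

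Then I would apply the construction of Theorem \ref{th1} verbatim to $\mathcal{B}$. This produces a 1QCFA $\mathcal{A}^1$ with a single quantum state $|q_0\rangle$, four classical states $\{p_0,p_1,p_r,s_r\}$ (where $s_r$ is the extra rejecting state introduced by the simulation to handle the right end-marker $\$$), all quantum transformations equal to the identity, all measurements trivial, and the classical transition function inherited from $\delta$ on the symbols of $\Sigma$ together with the end-marker rule $\delta'(s,\$)(c)=s$ when $s\in\{p_0,p_1\}$ and $\delta'(s,\$)(c)=s_r$ otherwise. Since no coin-flips or nontrivial measurements are ever performed, $\mathcal{A}^1$ recognizes $L_1$ with certainty, i.e., with bounded error $0$.

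There is essentially no obstacle: the lemma is a direct corollary of Theorem \ref{th1}, and the only thing to check is that $a^*b^*$ is accepted by some $3$-state DFA. The real importance of the lemma will come in the following step of the succinctness argument, where $\mathcal{A}^1$ is to be combined with a 1QCFA for $L_2$ via Theorem \ref{thint} in order to produce a small 1QCFA recognizing $L_m = L_1 \cap L_2$.
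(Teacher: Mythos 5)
Your proposal is correct and matches the paper's argument exactly: the paper likewise exhibits the same three-state DFA for $a^*b^*$ (its Figure with states $p_0$, $p_1$, and a dead state $r$) and then invokes Theorem \ref{th1} to obtain a 1QCFA with $1$ quantum state and $4$ classical states recognizing $L_1$ with certainty. No further comment is needed.
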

\begin{proof}
$L_1$ can be accepted by a DFA $\mathcal{A}$ with 3 classical
states (see Figure \ref{f3}). According to Theorem \ref{th1},
$\mathcal{A}$ can be simulated by a 1QCFA $\mathcal{A}^1$ with 1
quantum state and 4 classical states.

\begin{figure}
\centering

\setlength{\unitlength}{1cm}
\begin{picture}(16,4)\thicklines
\put(6,3){\circle{1}\circle{0.9}\makebox(0,0){$p_{0}$}}
\put(10,3){\circle{1}\circle{0.9}\makebox(0,0){$p_{1}$}}
\put(6.5,3){\vector(1,0){3}\makebox(-3,0.5){b}}
\put(8,1){\circle{1}\makebox(0,0){$r$}}

\put(4,3){\vector(1,0){1.5}\makebox(-1.5,0.5){Start}}

\qbezier(5.64,3.35) (6,5) (6.35,3.45)
\put(6.35,3.45){\vector(1,-2){0.1}\makebox(0.2,1){$a$}}

\qbezier(9.64,3.35) (10,5) (10.35,3.45)
\put(10.35,3.45){\vector(1,-2){0.1}\makebox(0.2,1){$b$}}

\put(9.64,2.64){\vector(-1,-1){1.25}\makebox(0.5,-1.8){$a$}}

\put(8,1){\circle{1}\makebox(0,0){$r$}}

\qbezier(7.64,0.64) (8,-1) (8.25,0.44)
\put(8.25,0.44){\vector(1,2){0.1}\makebox(0.5,-0.8){$a,b$}}

\end{picture}
\centering\caption{A DFA recognizing the language $L_1$}\label{f3}
\end{figure}
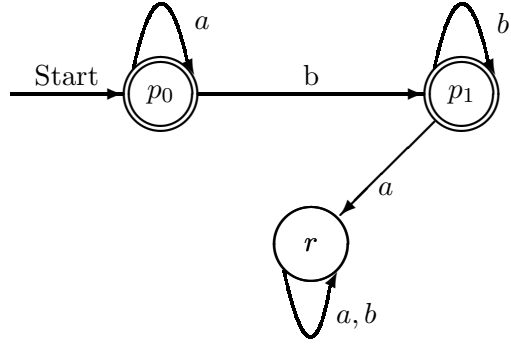

\end{proof}
\begin{lemma}[\cite{Amb-F}]\label{l5}
For any $\epsilon>0$, there is an MM-1QFA $\mathcal{A}$ with $\bf{O}(\log{m})$ quantum states recognizing $L_2$ with a bounded error $\epsilon$.
\end{lemma}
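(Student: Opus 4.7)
The plan is to transfer the Ambainis--Freivalds construction for recognizing $L_p=\{a^i : p \mid i\}$ to the binary alphabet case. Since membership in $L_2$ depends only on $|w|$ and not on which of $a,b$ is read, we can use the \emph{same} unitary $U$ for both letters, so the automaton effectively counts length modulo $m$ using quantum rotations arranged on $t=\mathbf{O}(\log m)$ parallel two-dimensional ``tracks''.

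Construction: fix an integer $t$ of order $\log m$ (depending on $\epsilon$) and choose $k_1,\ldots,k_t\in\{1,\ldots,m-1\}$; set $\theta_i=2\pi k_i/m$. Let the quantum register have orthonormal basis $\{|q_i^{(0)}\rangle,|q_i^{(1)}\rangle:1\leq i\leq t\}$ together with an accepting basis vector and a rejecting one. Take the initial state to be $|q_0\rangle=\tfrac{1}{\sqrt{t}}\sum_{i=1}^t|q_i^{(0)}\rangle$. For every $\sigma\in\{a,b\}$ let $\Theta_\sigma=U$ act as the rotation by $\theta_i$ inside track $i$:
\[
|q_i^{(0)}\rangle\mapsto\cos\theta_i\,|q_i^{(0)}\rangle+\sin\theta_i\,|q_i^{(1)}\rangle,\qquad |q_i^{(1)}\rangle\mapsto-\sin\theta_i\,|q_i^{(0)}\rangle+\cos\theta_i\,|q_i^{(1)}\rangle.
\]
The left end-marker is $I$; neither $Q_{acc}$ nor $Q_{rej}$ contains any of the $|q_i^{(j)}\rangle$, so intermediate measurements leave the computation in the ``non-halting'' branch with probability $1$. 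The right end-marker applies a unitary that maps $|q_0\rangle$ into the accepting state and sends the subspace orthogonal to $|q_0\rangle$ into the rejecting state, followed by the projective measurement $\{P_a,P_r,P_n\}$.

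Analysis: after $n$ input symbols the state is $|\psi_n\rangle=\tfrac{1}{\sqrt{t}}\sum_i\bigl(\cos(n\theta_i)|q_i^{(0)}\rangle+\sin(n\theta_i)|q_i^{(1)}\rangle\bigr)$, hence the acceptance probability equals $\bigl|\langle q_0|\psi_n\rangle\bigr|^2=\bigl(\tfrac{1}{t}\sum_i\cos(n\theta_i)\bigr)^2$. If $m\mid n$ every cosine equals $1$ and acceptance is certain. If $m\nmid n$, then $n\bmod m$ is one of the $m-1$ nonzero residues $r$, and the average of $\cos(2\pi kr/m)$ over $k\in\{1,\ldots,m-1\}$ equals $-1/(m-1)$, so each single term is bounded away from $1$ in expectation over $k$.

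The main obstacle is controlling the acceptance probability \emph{uniformly} in the $m-1$ bad residues with only $\mathbf{O}(\log m)$ tracks. This is the standard probabilistic step from \cite{Amb-F}: for each fixed nonzero residue $r$, a Hoeffding bound on the independent bounded random variables $\cos(2\pi k_i r/m)$ shows that $\tfrac{1}{t}\sum_i\cos(2\pi k_ir/m)$ deviates from its mean by more than $\sqrt{\epsilon}$ with probability at most $e^{-\Omega(\epsilon t)}$; a union bound over the $m-1$ residues, combined with $t=c\log m$ for $c$ large enough in terms of $\epsilon$, exhibits a deterministic choice of $k_1,\ldots,k_t$ for which the acceptance probability is at most $\epsilon$ for every $n$ with $m\nmid n$. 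This gives an MM-1QFA with $2t+\mathbf{O}(1)=\mathbf{O}(\log m)$ quantum states recognizing $L_2$ with bounded error $\epsilon$, as claimed.
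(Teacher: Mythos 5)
The paper offers no proof of this lemma---it is quoted directly from Ambainis--Freivalds---and your reconstruction is precisely the construction that citation relies on, adapted in the obvious way to the two-letter alphabet (the same rotation $U$ on every input symbol, since membership in $L_2$ depends only on $|w|$). The core of the argument is sound: the mean $-1/(m-1)$ of $\cos(2\pi kr/m)$ over $k\in\{1,\dots,m-1\}$, the two-sided Hoeffding bound per nonzero residue, and the union bound over $m-1$ residues giving a deterministic choice of $k_1,\dots,k_t$ with $t=\mathbf{O}(\log m)$ are exactly the standard probabilistic step. Two small wrinkles to tidy up. First, a unitary cannot send the entire $(2t-1)$-dimensional orthogonal complement of $|q_0\rangle$ ``into the rejecting state'' if $Q_{rej}$ is a single basis vector; you need $|Q_{rej}|$ of order $t$ (still $\mathbf{O}(\log m)$ states in total), or else you must argue that leaving that probability mass in $Q_{non}$ after $\$$ is acceptable for the notion of recognition being used (it is acceptable for the way Lemma \ref{l6} later converts this MM-1QFA into a 1QCFA, where unaccepted mass at $\$$ is routed to a rejecting classical state). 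Second, as the paper writes $L_2$ with $k=1,2,\dots$, the empty word is excluded, whereas your automaton (and the cited $L_p$ result, where $0$ is divisible by $p$) accepts it with certainty; this single-word discrepancy is a convention issue inherited from the paper rather than a flaw in your argument, but it deserves a sentence.
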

\begin{lemma}\label{l6}
For any $\epsilon>0$, there is a 1QCFA $\mathcal{A}^2$ with
$\bf{O}(\log{m})$ quantum states and 3 classical states
recognizing $L_2$ with a bounded error $\epsilon$.
\end{lemma}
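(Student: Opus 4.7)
The plan is to derive this as an immediate corollary of two results already established in the excerpt: Lemma~\ref{l5}, which provides an MM-1QFA with $\mathbf{O}(\log m)$ quantum states recognizing $L_2$ with bounded error $\epsilon$, and Theorem~\ref{th4}, which shows that any $n$-quantum-state MM-1QFA can be simulated by a 1QCFA with $n$ quantum states and exactly $3$ classical states. The composition of these two results directly yields a 1QCFA for $L_2$ whose quantum and classical state counts match the claim.

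Concretely, I would first invoke Lemma~\ref{l5} to obtain, for the given $\epsilon>0$, an MM-1QFA $\mathcal{A} = (Q,\Sigma,\Theta,|q_0\rangle,Q_{acc},Q_{rej})$ with $|Q| = \mathbf{O}(\log m)$ that recognizes $L_2$ with bounded error $\epsilon$. Next I would apply the construction in the proof of Theorem~\ref{th4} to $\mathcal{A}$, producing a 1QCFA $\mathcal{A}^2$ whose quantum register is $Q^2 = Q$ (so $|Q^2| = \mathbf{O}(\log m)$) and whose classical component has the three states $\{s_0^2, s_a^2, s_r^2\}$, with the quantum transition, measurement, and classical transition functions specified exactly as in that theorem's proof. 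Since Theorem~\ref{th4} preserves the acceptance/rejection probabilities of the simulated MM-1QFA on every input, $\mathcal{A}^2$ recognizes $L_2$ with the same bounded error $\epsilon$.

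There is no genuine obstacle here; the only thing to double-check is that the simulation in Theorem~\ref{th4} is error-preserving (which is clear from its construction, as the classical component merely records the absorbing outcomes $c_a, c_r$ and otherwise propagates the non-halting branch), and that the $\mathbf{O}(\log m)$ quantum-state bound from Lemma~\ref{l5} is exactly what the simulation reproduces. Hence the Lemma follows immediately.
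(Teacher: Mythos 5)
Your proposal is correct and matches the paper's own proof exactly: the paper likewise obtains the MM-1QFA from Lemma~\ref{l5} and then applies the simulation of Theorem~\ref{th4} to get a 1QCFA with $\mathbf{O}(\log m)$ quantum states and $3$ classical states. No further comment is needed.
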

\begin{proof}
According to Lemma \ref{l5}, there is an MM-1QFA $\mathcal{A}$
with $\bf{O}(\log{m})$ quantum states recognizing $L_2$ with
bounded error $\epsilon$. According to Theorem \ref{th4}, an
$\bf{O}(\log{m})$ quantum states  MM-1QFA $\mathcal{A}$ can be
simulated by a 1QCFA  with $\bf{O}(\log{m})$ quantum states and 3
classical states.
\end{proof}

\begin{theorem}
For any $\epsilon>0$, $L_{m}$ can be recognized by a 1QCFA with
$\bf{O}(\log{m})$ quantum states and 12 classical states with a
bounded error $\epsilon$.
\end{theorem}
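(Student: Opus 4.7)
The plan is to observe that $L_m = L_1 \cap L_2$ and then to apply the intersection closure result (Theorem \ref{thint}) to the two component automata already constructed in Lemmas \ref{l4} and \ref{l6}. Since recognizing $L_1$ incurs zero error, the error in the intersection construction degenerates to the error we allow for $L_2$ alone, so we do not have to cut $\epsilon$ in half or handle any nontrivial error-amplification step.

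Concretely, first I would take the 1QCFA $\mathcal{A}^1$ from Lemma \ref{l4} that accepts $L_1 = \{a^*b^*\}$ with certainty using $1$ quantum state and $4$ classical states (so $\epsilon_1 = 0$). Next, given the desired error bound $\epsilon > 0$, I would invoke Lemma \ref{l6} to obtain a 1QCFA $\mathcal{A}^2$ that recognizes $L_2$ with bounded error $\epsilon_2 = \epsilon$, using $\mathbf{O}(\log m)$ quantum states and $3$ classical states. Then I would form the product 1QCFA $\mathcal{A}$ exactly as in the proof of Theorem \ref{thint}; because $L_m = L_1 \cap L_2$, the resulting machine recognizes $L_m$ with bounded error
\begin{equation}
\epsilon_1 + \epsilon_2 - \epsilon_1 \epsilon_2 \;=\; 0 + \epsilon - 0 \;=\; \epsilon.
\end{equation}

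It then remains only to read off the state complexities from the remark following Theorem \ref{thint}: $QS(\mathcal{A}) = QS(\mathcal{A}^1)\cdot QS(\mathcal{A}^2) = 1 \cdot \mathbf{O}(\log m) = \mathbf{O}(\log m)$ and $CS(\mathcal{A}) = CS(\mathcal{A}^1)\cdot CS(\mathcal{A}^2) = 4 \cdot 3 = 12$, matching the claim precisely. There is no genuine obstacle here: the numerical match ($4 \times 3 = 12$) is exactly why the authors packaged the factors as they did, and using an $\epsilon_1 = 0$ component on one side of the intersection is what prevents any loss in the error bound. The only thing worth double-checking is that the alphabets agree so that the intersection construction applies without the alphabet-extension device from the remark after the union theorem; both $L_1$ and $L_2$ live over $\Sigma = \{a,b\}$, so this is automatic.
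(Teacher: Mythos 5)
Your proposal is correct and follows essentially the same route as the paper: decompose $L_m = L_1 \cap L_2$, take $\mathcal{A}^1$ from Lemma \ref{l4} with $\epsilon_1 = 0$ and $\mathcal{A}^2$ from Lemma \ref{l6} with error $\epsilon$, and apply the intersection construction of Theorem \ref{thint} so that the resulting error is $\epsilon_1+\epsilon_2-\epsilon_1\epsilon_2=\epsilon$ with $1\cdot \mathbf{O}(\log m)$ quantum states and $4\cdot 3 = 12$ classical states. Your additional remark about the alphabets agreeing is a harmless extra check that the paper leaves implicit.
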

\begin{proof}
$L_m=L_1\cap L_2$. According to Lemma \ref{l4}, the language $L_1$
can be recognized by 1QCFA $\mathcal{A}^1$ with 1 quantum state
and 4 classical states with certainty (i.e., $\epsilon_1=0$).
According to Lemma \ref{l6}, for any $\epsilon>0$, the language
$L_2$ can be recognized by 1QCFA $\mathcal{A}^2$ with
$\bf{O}(\log{m})$ quantum states and 3 classical states with a
bounded error $\epsilon$. According to Theorem \ref{thint}, 1QCFA
is closed under intersection. Hence, there is a 1QCFA
$\mathcal{A}$ recognize $L_m$ with a bounded error $\epsilon$.
Therefore $QS(\mathcal{A}^1)=1$, $CS(\mathcal{A}^1)=4$,
$QS(\mathcal{A}^2)=\bf{O}(\log{m})$ and $CS(\mathcal{A}^2)=3$, so
$QS(\mathcal{A})=QS(\mathcal{A}^1)\times
QS(\mathcal{A}^2)=\bf{O}(\log{m})$, $CS(\mathcal{A})=
CS(\mathcal{A}^1)\times CS(\mathcal{A}^2)=12$.
\end{proof}

\section{Conclusions}
2QCFA were introduced by Ambainis and Watrous \cite{AJ}. In this
paper, we investigated the one-way version of 2QCFA, namely 1QCFA.
Firstly, we gave a formal definition of 1QCFA. Secondly, we
showed that DFA, 1PFA, MO-1QFA, MM-1QFA and 1QFACL can be
simulated by 1QCFA.  As we know, the behavior of 1QFACL seems to
be rather complicated. However, when we used a 1QCFA to simulate a
1QFACL, the behavior of 1QCFA started to be seen as quite natural.
Thirdly, we studied closure properties of languages accepted by
 1QCFA, and we proved that the family of languages accepted by
1QCFA is closed under intersection, union, and complement.
Fourthly, for any fixed $\epsilon_1>0$ and any prime $m$ we have showed that the language $L_{m}=\{ a^*b^*\mid\  |a^*b^*|=km,
k=1,2,\cdots\}$, cannot be recognized
by any MM-1QFA with bounded error $\frac{7}{9}+\epsilon_1$, and
any 1PFA recognizing it has at last $m$ states, but $L_{m}$ can be
recognized by a 1QCFA for any error bound $\epsilon>0$ with
$\bf{O}(\log{m})$ quantum states and 12 classical states. Thus,
1QCFA can make use of merits of both 1FA and 1QFA.

To conclude, we would like to propose some problems for further consideration.
\begin{enumerate}
  \item Obviously, all regular languages can be recognized by 1QCFA. Is there any non-regular language recognized by 1QCFA?
  \item Are 1QCFA closed under catenation and reversal?
\end{enumerate}

\end{document}